\newtheorem{theorem}{Theorem}
\newtheorem{lem}{Lemma}
\newtheorem{prop}{Proposition}
\theoremstyle{definition} \newtheorem{defn}{Definition}
\newcommand{\ket}[1]{\left| #1 \right\rangle}
\newcommand{\bra}[1]{\left\langle #1 \right|}
\newcommand{\half}{\frac{1}{2}}
\newcommand{\nn}{\nonumber}
\newcommand{\ltqo}{{L_{tqo}}}
\newcommand{\calD}{{\cal D }}
\newcommand{\calL}{{\cal L }}
\newcommand{\calZ}{{\cal Z }}
\newcommand{\ZZ}{\mathbb{Z}}
\newcommand{\trace}{\mathop{\mathrm{Tr}}\nolimits}
\begin{document}

\title{Analytic and numerical demonstration of quantum self-correction in the 3D Cubic Code}

\author{Sergey \surname{Bravyi}}
\affiliation{IBM Watson Research Center,  Yorktown Heights,  NY 10598}
\author{Jeongwan \surname{Haah}}
\affiliation{Institute for Quantum Information and Matter, California Institute of Technology, Pasadena, CA 91125}

\date{14 December 2011}

\begin{abstract}
A big open question in the quantum information theory concerns feasibility of a self-correcting quantum memory.
A quantum state recorded in such memory can be stored reliably for a macroscopic time without need for
active error correction if the memory is put in contact with a cold enough thermal bath.
In this paper we derive a rigorous lower bound on the memory time $T_{mem}$ of the 3D Cubic Code model which was recently
conjectured to have a self-correcting behavior.
Assuming that dynamics of the memory system can be described by a Markovian
master equation of Davies form, we prove that   $T_{mem}\ge L^{c\beta}$ for some constant $c>0$, where $L$ is the lattice size
and $\beta$ is the inverse temperature of the bath.  However,
this bound applies only if the lattice size does not exceed certain critical value $L^*\sim e^{\beta/3}$.
 We also report a numerical Monte Carlo simulation of the studied memory indicating that our analytic bounds on $T_{mem}$ are tight up to constant coefficients.  In order to model the readout step we introduce a new decoding algorithm which might
 be of independent interest. Our decoder can be implemented efficiently for any topological stabilizer code and has a constant error threshold under
random uncorrelated errors.
 \end{abstract}

\pacs{03.67.Pp, 03.67.Ac, 03.65.Ud}

\maketitle

\tableofcontents

\section{Introduction}

Any practical memory device must function reliably in a presence of small
hardware imperfections and protect the recorded data against  thermal noise.
Building a memory capable of  storing quantum information is particularly challenging since
a quantum state must be protected against both bit-flip and phase-flip errors.
Furthermore, in contrast to classical bits, quantum states form a continuous set
thus being more vulnerable to small  hardware imperfections.

Ground states of topologically ordered many-body systems, such as
fractional quantum Hall liquids~\cite{NayakEtAl2008tqc,HormoziBenesteelSimon2009tqc} and unpaired Majorana  fermions
in nanowires and 2D heterostructures~\cite{FuKane2008proximity,
SauLutchynEtAl2010semiconductor,
OregRefaelOppen2010helical,
AliceaOregRefaelEtAl2011wire,
StanescuLutchynSarma2011semiconductor}
were proposed as a natural quantum data repository insensitive to local imperfections.
A qubit encoded into the ground subspace of a topologically ordered system is almost perfectly
decoupled from any local perturbation due to local indistinguishability of the ground
states~\cite{Kitaev2003Fault-tolerant,
BravyiHastingsMichalakis2010stability,
BravyiHastings2011short,
MichalakisPytel2011stability}.

To undo the effect of noise, a user of any memory,  either classical or quantum, must invoke some form of
error correction.   It was shown by Dennis et al~\cite{DennisKitaevLandahlEtAl2002Topological} that topological memory
based on the Kitaev's 2D toric code model~\cite{Kitaev2003Fault-tolerant} can
tolerate stochastic local noise provided that error correction is performed frequently enough to prevent
errors from accumulating. However, such active error correction would require an extensive and fast classical input/output to
the quantum hardware which might pose a challenge for building large-scale memory devices.

An intriguing open question raised in~\cite{DennisKitaevLandahlEtAl2002Topological,Bacon2006Operator} is whether topological memories can be
{\em self-correcting}, that is, whether active error correction can be
imitated by the natural dynamics of the memory system coupled to a thermal bath.
The physical mechanism behind self-correction envisioned in~\cite{Bacon2006Operator}
relies on a presence of  ``energy barriers" separating distinct ground states
and energy dissipation.
If the energy barriers are high enough
and the evolution time is not too long,
the memory system will be  locked in the  ``energy valley" surrounding the initial ground state.
A user of a  self-correcting memory would
only be responsible for preparation of the initial ground state and performing one final round of active error correction
at the readout step. The latter is required to clean up the residual low-energy errors.
In contrast to the active error correction approach of~\cite{DennisKitaevLandahlEtAl2002Topological}, the storage itself would require no action
from the user whatsoever.

The nature of excitations in a topological memory plays a crucial role in assessing its self-correcting
capability. Anyons in the 2D toric code~\cite{Kitaev2003Fault-tolerant} provide a paradigmatic example of such excitations.
A fundamental  flaw of 2D topological memories that rules out  self-correction
is the lack of energy barriers that could suppress diffusion of anyons over large distances~\cite{DennisKitaevLandahlEtAl2002Topological}.
For instance, consider 
a process that involves a creation of anyon pair from the ground state,
a transport of one anyon along a non-contractible loop on the torus,
and a final annihilation of the pair. This process 
enacts a non-trivial transformation on the ground subspace. However, 
it can be  implemented by a stream of local errors at a constant energy cost.
The lack of self-correction for the 2D toric code model  was rigorously confirmed by Alicki et al~\cite{AlickiFannesHorodecki2008thermalization} who showed that the relaxation time towards the equilibrium state is a constant independent of the lattice size.
A more general no-go result for quantum self-correction based on arbitrary 2D stabilizer code Hamiltonians
was derived in~\cite{BravyiTerhal2008no-go, KayColbeck2008Quantum}.

While anyons have a rich algebraic structure that depends on a particular model, their common property
is that anyons are {\em topological defects}.
More broadly, a topological defect is  a point-like excitation that cannot be created from the ground state by a local operator  without creating other excitations.
A domain wall in the 1D ferromagnetic Ising chain provides
the simplest example of a topological defect.
The second property that applies to all models with anyonic excitations is that
anyons are {\em mobile} topological defects. In other words,
once an anyon has been created at some spatial location, a local operator can move it to another location
nearby without creating any other excitations. As we argued above, the latter property is the key reason why self-correction is hard, if not impossible, to achieve in 2D.
It raises a question whether the laws of physics permit
topological phases of matter without mobile topological defects~?
(In the coding language it translates to existence of topological codes without string-like logical operators.)
Quite surprisingly, the answer turns out to be yes. A family of 3D topological memory models
without mobile topological defects was recently discovered by one of the authors~\cite{Haah2011Local}.
One of these models, which we call the 3D Cubic Code, obeys the so called no-strings rule~\cite{BravyiHaah2011Energy}.
It establishes a constant upper bound $\alpha$ on the mobility range of topological defects.
More precisely, given any isolated topological defect occupying a spatial region of size $R$, no local operator can move this defect distance  $\alpha R$ or more away
without creating extra excitations, see~\cite{Haah2011Local} for details.
A strong evidence in favor of self-correcting properties of the 3D Cubic Code has been recently obtained
in~\cite{BravyiHaah2011Energy} by  proving that the energy
barrier separating distinct ground states grows as a logarithm of the lattice size.

The main goal of the present paper is to assess self-correcting properties of the 3D Cubic Code
in a more direct way by  calculating its memory time as a function of the lattice size and the temperature.
For the sake of concreteness, we shall state our results only for the 3D Cubic Code, but our analysis applies
to any quantum memory model based on a stabilizer code with local generators
which obeys the topological order condition and the no-strings rule as stated in~\cite{BravyiHaah2011Energy}.

\section{Memory Hamiltonian, thermal noise, and decoding}
\label{sec:misc}

In order to use any memory, either classical or quantum, a user must be able to write, store,
and read information. In this section we describe these steps formally for a topological quantum memory
based on the 3D Cubic Code. Definition of the code Hamiltonian and its basic
properties are summarized in Section~\ref{subs:CubicCode}.
Our formal model of the thermal noise based on Davies weak-coupling limit is presented in Section~\ref{subs:Davies}.
The readout step consists of a syndrome measurement and an error correction, see Section~\ref{subs:decoding}.

\subsection{3D Cubic Code Hamiltonian}
\label{subs:CubicCode}

\begin{figure}
\newcommand{\drawgenerator}[8]{%
\xymatrix@!0{%
& #8 \ar@{-}[ld]\ar@{.}[dd] \ar@{-}[rr] & & #7 \ar@{-}[ld]  \\%
#1 \ar@{-}[rr] \ar@{-}[dd] &  & #2 \ar@{-}[dd] &            \\%
& #6 \ar@{.}[ld] &  & #5 \ar@{-}[uu] \ar@{.}[ll]       \\%
#3 \ar@{-}[rr] &  & #4 \ar@{-}[ru]                       %
}%
}
\centering
\begin{tabular}{ccc}
$ \drawgenerator{ZI}{ZZ}{IZ}{ZI}{IZ}{II}{ZI}{IZ} $
& $\quad$ &
$ \drawgenerator{XI}{II}{IX}{XI}{IX}{XX}{XI}{IX} $ \\
$G^Z$ & $\quad$ & $G^X$
\end{tabular}
\caption{Stabilizer generators of the 3D Cubic Code. Here
$X\equiv \sigma^x$ and $Z\equiv \sigma^x$ represent single-qubit Pauli operators,
while $I$ is the identity operator.
Double-letter indices represent two-qubit Pauli operators,
for example,
$IZ\equiv I\otimes Z$, $ZZ\equiv Z\otimes Z$, $II\equiv I\otimes I$ etc.}
\label{fig:CubicCode}
\end{figure}

Let $\Lambda=\ZZ_L\times \ZZ_L \times \ZZ_L$ be the regular 3D Cubic lattice
of linear size $L$ with periodic boundary conditions. Each site of the lattice is occupied by two qubits.
The 3D Cubic Code Hamiltonian introduced in~\cite{Haah2011Local} has the following form:
\begin{equation*}
H=-J \sum_{c} G^X_c + G^Z_c,
\end{equation*}
where the sum runs over all $L^3$ elementary cubes $c$ and the operators $G^X_c$, $G^Z_c$
act on the qubits of $c$ as shown on Fig.~\ref{fig:CubicCode}.
The positive coupling constant $J$ will be set to $J=\half$ for simplicity.
We shall refer to operators $G^X_c$, $G^Z_c$ as {\em stabilizer generators}, or simply
stabilizers. Note that each stabilizer acts non-trivially only on $8$ qubits.
Let us recall some basic properties of the 3D Cubic Code, see~\cite{Haah2011Local} for details.
First, one can easily check that the stabilizers $G^X_c$, $G^Z_{c'}$ commute with each other for all $c,c'$.
A ground state of $H$ is a common $+1$ eigenstate of all stabilizers.
The degeneracy of the ground states is $2^{k(L)}$ for some integer $2\le k(L)\le 4L$.
The ground subspace of $H$ has topological order, that is, different ground states
cannot be distinguished locally. More precisely, if $O$ is any operator
whose support can be bounded by a cubic box of size $<L$ then the restriction of $O$
onto the ground subspace is proportional to the identity operator.
Excited states of $H$ can be described by configurations of defects, that is, stabilizers
whose eigenvalue is $-1$. Each defect costs one unit of energy.

\subsection{Thermal noise}
\label{subs:Davies}

Suppose at time $t=0$ the memory system is initialized in some ground state $\rho(0)$
encoding a quantum state to be stored.
We shall model interaction between the memory system and the thermal bath
using the Davies weak coupling limit~\cite{Davies1974}. It provides a Markovian
master equation of the following form:
\begin{equation}
\dot{\rho}(t)=-i[H,\rho(t)] + \calL(\rho(t)), \quad t\ge 0.
\label{eq:Markovian-master}
\end{equation}
Here $\rho(t)$ is the state of the memory system at time $t$ and $\calL$ is the Lindblad generator describing dissipation of energy. To define $\calL$, let us choose some set of self-adjoint operators $\{A_\alpha\}$ through which
the memory can couple to the bath.  We shall assume that each $A_\alpha$
acts non-trivially on a constant number of qubits.
For example, $\{A_\alpha\}$ could be the set
of all single-qubit Pauli operators. Let $A_\alpha=\sum_{\omega} A_{\alpha,\omega}$,
where $A_{\alpha,\omega}$ is the spectral component of $A_\alpha$
that maps eigenvectors of $H$ with energy $E$ to eigenvectors with energy $E-\omega$. Then
\begin{equation}
\calL( \rho)=\sum_\alpha \sum_\omega h(\alpha,\omega) \left(
A_{\alpha,\omega} \rho A_{\alpha,\omega}^\dag - \frac12 \{  \rho,A_{\alpha,\omega}^\dag A_{\alpha,\omega} \} \right).
\label{eq:lindbladian}
\end{equation}
The coefficient $h(\alpha,\omega)$ is the rate of quantum jumps caused by $A_\alpha$ transferring energy $\omega$
from the memory to the bath. It must obey the  detailed balance condition
\begin{equation}
\label{DB}
h(\alpha,-\omega)=e^{-\beta \omega} h(\alpha,\omega),
\end{equation}
where $\beta$ is the inverse bath temperature. The detailed balance condition Eq.\eqref{DB} is the only part
of our model that depends on  the bath temperature. It guarantees that
the Gibbs state $\rho_\beta \sim e^{-\beta H}$ is the fixed point of the
dynamics, $\calL(\rho_\beta)=0$. This is a unique fixed point under certain natural ergodicity
conditions~\cite{Spohn1977}.
Furthermore, we shall assume that $\|A_\alpha\|\le 1$ and
\begin{equation}
\label{eq:weak}
\max_{\alpha,\omega} h(\alpha,\omega) = O(1).
\end{equation}
Let us remark that the Davies weak coupling limit was adopted
as a  model of the thermal dynamics
in most of the previous works with a rigorous analysis of quantum self-correction;
see for instance~\cite{AlickiHorodeckiHorodeckiEtAl2008thermal,
AlickiFannesHorodecki2008thermalization,
ChesiLossBravyiEtAl2009Thermodynamic,
ChesiRoethlisbergerLoss2009Self-Correcting}.

\subsection{Decoding}
\label{subs:decoding}

The final state $\rho(t)$ generated by the Davies dynamics can be regarded as a corrupted version of the
initial encoded state $\rho(0)$.   A decoder retrieves the encoded information from $\rho(t)$
by performing a syndrome measurement and an error correction.
A syndrome measurement involves a non-destructive eigenvalue measurement of all
stabilizer generators $G^X_c$, $G^Z_c$.
The measured syndrome $S$ can be regarded as a classical bit string that assigns an eigenvalue
$\pm 1$ to each generator. Let $\Pi_S$ be the
projector onto the subspace with syndrome $S$.

The error correction step is specified by an algorithm that takes as input  the measured syndrome
$S$ and returns a correcting Pauli operator $P_{ec}(S)$.  A good choice of the error correction algorithm
is a highly non-trivial task which we discuss in Section~\ref{sec:rgdecoder}.
The net action of the decoder on states can be described
by  a trace preserving completely positive (TPCP) linear map $\Phi_{ec}$ such that
\begin{equation}
\label{decoder}
\Phi_{ec}(\rho)=\sum_S P_{ec}(S)\Pi_S \, \rho \Pi_S P_{ec}(S)^\dag,
\end{equation}
where the sum runs over all possible syndromes.

\section{Main results and sketch of the proof}
\label{sec:results}

\begin{figure*}[p]
 \includegraphics[width=.71\textwidth]{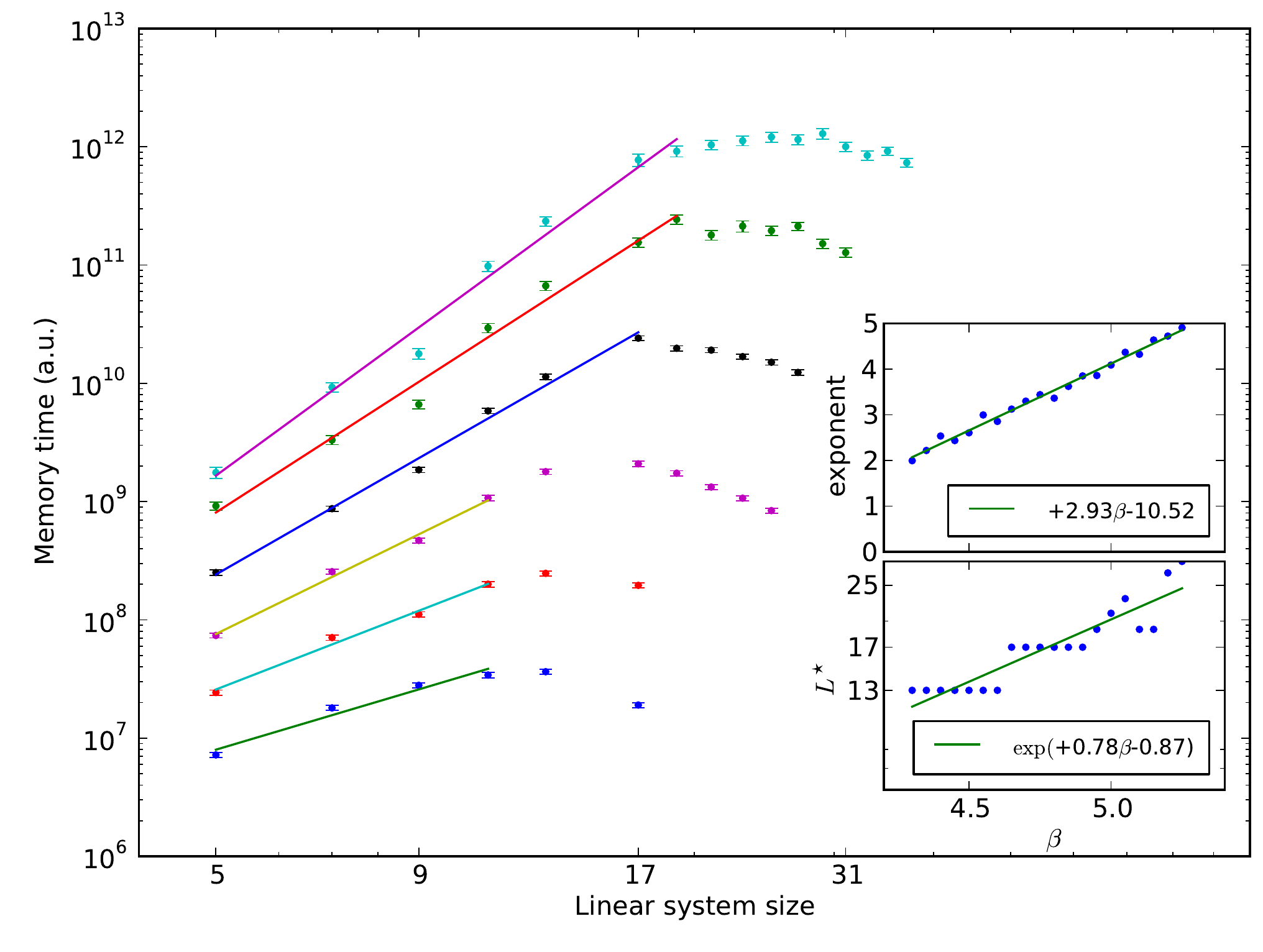}
\caption{The memory time $T_{mem}$ vs. the system size $L$. In the upper inset is shown the exponent of the power law fit of $T_{mem}$ for the first a few system sizes. It is clear that $T_{mem} \propto L^{2.93 \beta -10.5}$ when $L < L^\star$, where $L^\star$ is the optimal system size where $T_{mem}$ reaches maximum. The data for $\beta = 4.3, 4.5, 4.7, 4.9, 5.1, 5.25$ are shown.}
\label{fig:TvsL}
\end{figure*}

\begin{figure*}[p]
 \includegraphics[width=.71\textwidth]{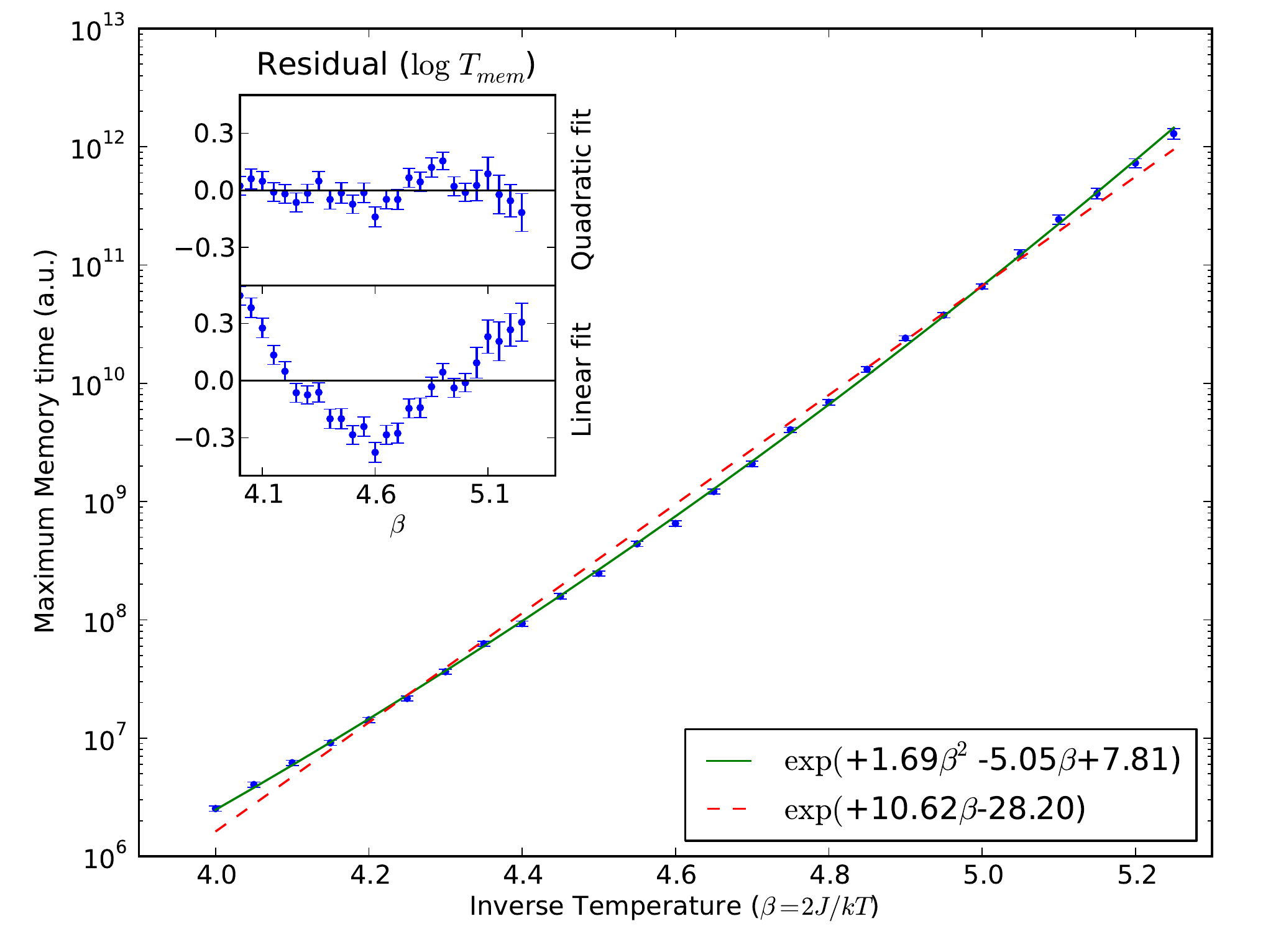}
\caption{The maximum memory time $T_{mem}$ vs. the inverse temperature $\beta$. The memory time is maximized with respect to the system size. The logarithm of $T_{mem}$ clearly follows a quadratic relation with $\beta$ as oppose to a linear one.}
\label{fig:TvsBeta}
\end{figure*}

Our first result is an upper bound on the storage error --- the trace
distance between the initial encoded state and the final error corrected state.
\begin{theorem}
\label{thm:main1}
There exists a decoder $\Phi_{ec}$ and a constant $c>0$
such that for for any inverse temperature $\beta>0$, any state $\rho(0)$ supported on the ground subspace
of $H$, any evolution time $t\ge 0$, and any constant $0<a<1$ one has
\begin{equation}
\label{error-bound}
\epsilon(t)\equiv \| \rho(0) - \Phi_{ec}(\rho(t)) \|_1 \le O(t) \cdot 2^{k(L)} \cdot L^{3-ac\beta}
\end{equation}
as long as $L\le e^{(1-a)\beta/3}$. The error correction algorithm used by the decoder
has running time $poly(L)$.
\end{theorem}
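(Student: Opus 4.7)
The plan is to reduce the Davies evolution to a classical stochastic analysis. I would start by Dyson-expanding the Lindbladian semigroup around the coherent part $-i[H,\cdot\,]$, representing $\rho(t)$ as a superposition over jump trajectories $\{(\alpha_j,\omega_j,t_j)\}_{j=1}^n$. Because $H$ is a sum of commuting Pauli stabilizers and each $A_\alpha$ can be taken to be a single-qubit Pauli, the spectral components $A_{\alpha,\omega}$ are products of $A_\alpha$ with a syndrome projector, and the coherent evolution acts as a phase on each syndrome sector. Thus, at the level of trajectories, the dynamics reduces to a continuous-time classical Markov chain on syndromes in which each jump applies one Pauli, with transition rates $h(\alpha,\omega)$ satisfying detailed balance~\eqref{DB}. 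I would then group trajectories by the total accumulated Pauli error $E$ (modulo stabilizers) and write $\rho(t)$ as an ensemble $\sum_E p_E(t)\, E\rho(0)E^\dagger$, handling residual off-diagonal terms by a standard coherence argument for stabilizer codes.

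Next I would analyze success of the decoder $\Phi_{ec}$, taken to be the renormalization-group decoder constructed in Section~\ref{sec:rgdecoder}. The decoder fails on a trajectory only if $E$ is inequivalent modulo stabilizers to its correction $P_{ec}(S(E))$, i.e.\ the net effect is a non-trivial logical Pauli. The key input is the energy barrier lower bound of~\cite{BravyiHaah2011Energy}: any sequence of single-qubit Paulis transforming a ground state into an orthogonal ground state must pass through an intermediate configuration with at least $b=\Omega(\log L)$ defects. Using the threshold property of the RG decoder, one shows that it correctly undoes any accumulated error whose history never crosses this barrier, up to a coarse-graining that is transparent to the Arrhenius bound below.

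The probabilistic heart of the proof is a Peierls-type estimate. By detailed balance, each uphill jump of energy $\omega>0$ carries a rate bounded by $e^{-\beta\omega}\max_\alpha h(\alpha,-\omega)=O(e^{-\beta\omega})$. Summing the Dyson series over jump histories that first reach energy level $b$ at some time in $[0,t]$, the cumulative weight is bounded by $O(t)\cdot N(b)\cdot e^{-\beta b}$, where $N(b)$ counts local error histories of length $O(b)$ rooted at any qubit. The quantitative no-strings rule is then used to confine such excursions to a region of diameter $O(b)$, giving $N(b)\le L^3\cdot C^b$ for a constant $C$. Setting $b=c\log L$ and absorbing a fraction $(1-a)$ of the exponent into the entropy factor yields $L^{3-ac\beta}$. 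Summing over the $\le 2^{k(L)}$ distinct non-trivial logical cosets (to convert a pointwise bound into a trace-norm bound on the ground subspace) produces the $2^{k(L)}$ prefactor in~\eqref{error-bound}. The cutoff $L\le e^{(1-a)\beta/3}$ is exactly the regime in which the typical equilibrium density of spontaneous defects, of order $L^3 e^{-(1-a)\beta}$, is small enough that the initial configuration visible to the decoder does not itself already contain a logical error.

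I expect the main obstacle to be the entropy estimate $N(b)\lesssim L^3 C^b$. A naive enumeration of error paths on $O(L^3)$ qubits produces an exponential blowup in $b$ whose base could swallow the Boltzmann suppression $e^{-\beta b}$ and leave no net decay. Avoiding this requires using the no-strings rule \emph{quantitatively} to restrict barrier-reaching histories to a compact spatial region, together with the observation that the RG decoder only needs control of coarse-grained syndrome statistics rather than the microscopic trajectory. This is where the loss factor $0<a<1$ is consumed and where the structural analysis of the decoder in Section~\ref{sec:rgdecoder} is indispensable.
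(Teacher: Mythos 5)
Your proposal is morally aligned with the paper but deviates at exactly the point you yourself flag as the main obstacle, and that gap is not one the no-strings rule can fill. The paper does \emph{not} attempt a trajectory-counting Peierls bound of the form $O(t)\cdot N(b)\cdot e^{-\beta b}$ with $N(b)$ enumerating local error histories. Instead, it splits the Lindbladian into $\calL=\calL_-+\calL_+$ where $\calL_\pm$ use the truncated jump operators $\Pi_\mp A_{\alpha,\omega}$, shows via a Taylor expansion that $\Phi_{ec}\circ e^{\calL_- t}$ acts as the identity on ground states (since every partial product of Paulis appearing in $e^{\calL_- t}$ has energy barrier $<m+2f$), and then uses the Duhamel identity $e^{\calL t}=e^{\calL_- t}+\int_0^t e^{\calL_-(t-s)}\calL_+ e^{\calL s}\,ds$ to isolate the error contribution. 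The crucial step you are missing is how the probability of visiting the high-energy sector is controlled: the paper exploits that the detailed-balance condition makes $\calL^*$ self-adjoint in the Liouville inner product $\langle X,Y\rangle_\beta=\trace(\rho_\beta X^\dag Y)$, so that $\trace(Q_m e^{\calL s}(\rho_0))\le\trace(Q_m e^{-\beta H})$ for any initial ground state. This converts the dynamical bound into a \emph{static equilibrium} Gibbs weight, which is then evaluated by simply counting defect configurations: $\trace(Q_m e^{-\beta H})=2^k\sum_{n\ge m}\binom{N}{n}e^{-\beta n}\le 2^k e^{-a\beta m}(1+e^{-(1-a)\beta})^N$. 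The cutoff $L\le e^{(1-a)\beta/3}$ is exactly what makes the entropy factor $(1+e^{-(1-a)\beta})^N=O(1)$; it is not about the initial syndrome already containing a logical error (the initial state has zero defects).

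Consequently, your claimed resolution of the entropy blowup---using the no-strings rule to ``confine barrier-reaching histories to a compact spatial region'' so that $N(b)\le L^3 C^b$---is a misdirected use of that structure, and in fact no such spatial confinement of trajectories is needed or proved. In the paper, the no-strings rule enters only in Section~\ref{sec:Correctability}, where Lemma~2 of \cite{BravyiHaah2011Energy} is used to show that any Pauli $P$ with energy barrier $\Delta(P)\lesssim\log L$ produces a syndrome whose $R$-connected components (for $R\sim L^{c'/c}$) are all neutral and whose residual operator (modulo stabilizers) is localized near those components; this is what guarantees the RG decoder returns $P$ up to a stabilizer, i.e.\ the hypothesis of Lemma~\ref{lem:memory-time}. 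If you want to pursue the trajectory-counting route, you would need an honest first-passage estimate for the continuous-time Markov semigroup that controls histories of \emph{unbounded length} (not length $O(b)$, since paths can oscillate across the barrier), and you would have to rediscover the detailed-balance reduction to the Gibbs weight to avoid the exponential-in-$N$ path entropy---at which point you have recovered the paper's argument. Your treatment of coherence within syndrome sectors as ``standard'' is also cavalier; the paper avoids this issue entirely by working at the operator level and never unraveling into a classical ensemble over Pauli errors.
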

Clearly, our bound is most useful when  $k(L)$ is small.
In the following we shall mostly be interested in the
smallest ground state degeneracy, $k(L)=2$. This happens for any odd $3\le L\le 200$ such that
$L$ is not a multiple of $15$ or $63$, see~\cite{Haah2011Local}.
One can also show~\cite{Haah11+} that there exists an infinite sequence
of lattice sizes  such that $k(L)=2$, for example, $k(2^p+1)=2$ for all $p\ge 1$.
Unfortunately, a general explicit formula for $k(L)$ is unknown.

The upper bound on the storage error can be easily translated to a lower bound on the memory time.
Indeed, if one is willing to tolerate a fixed storage error $\epsilon$, say $\epsilon=0.01$,
the memory time $T_{mem}$ can be defined as the smallest $t\ge 0$ such that $\epsilon(t)\ge \epsilon$.
Assuming that the lattice size is chosen such that $k(L)=2$, Theorem~\ref{thm:main1} implies that
\begin{equation}
\label{Tmem-bound}
T_{mem}\ge L^{ac\beta-3} \quad \mbox{for any $L\le e^{(1-a)\beta/3}$}.
\end{equation}
Here we neglected the overall constant coefficient.
It shows that for low temperatures, $\beta \gg 1$,  and sufficiently small system size,
$L\ll e^{\beta/3}$,  the memory time grows polynomially with $L$,
while the degree of the polynomial is proportional to $\beta$.  To the best of our knowledge, this provides
the  first realistic example of a topological memory with a self-correcting behavior. Unfortunately, the
studied model has no self-correction in the thermodynamic limit $\beta=O(1)$ and $L\to \infty$.
In this sense, it is only partially self-correcting.
At the optimally chosen lattice size,
the maximum memory time $T_{mem}(\beta)$ achievable at a given temperature $\beta$
is easily found from Eq.~(\ref{Tmem-bound}):
\begin{equation}
\label{Tmem-bound1}
T_{mem}(\beta) \ge e^{c\beta^2/12}
\end{equation}
for $\beta\gg 1$. For comparison, the memory time of the 2D toric code model grows
only exponentially with $\beta$, see~\cite{AlickiFannesHorodecki2008thermalization, ChesiRoethlisbergerLoss2009Self-Correcting}.
Depending on the value of the constant $c$ and the temperatures realizable in experiments,
the scaling Eq.~(\ref{Tmem-bound1}) may be favorable enough to achieve macroscopic memory times.

The restriction $L \ll e^{\beta/3}$ implies that the average number of defects (flipped stabilizers)
in the equilibrium Gibbs state $\rho_\beta \sim e^{-\beta H}$
is less than one, that is, the Gibbs state has most of its weight on
the ground subspace of $H$. This might suggest that the thermal noise is irrelevant in the studied regime.
However, this is not the case. 
If the evolution time is large enough, so that $\rho(t)\approx \rho_\beta$, 
the encoded information cannot be retrieved from $\rho(t)$,
since $\rho_\beta$ does not depend on the initial state.
If a time $t\sim e^{\Omega(\beta^2)}$ has elapsed,
the system would have accommodated approximately $tL^3 e^{-\beta}\sim  e^{\Omega(\beta^2)}$ defects during the evolution.
This implies in particular that the system has endured $e^{\Omega(\beta^2)}$ errors which becomes significant for low temperatures.

Since Theorem~\ref{thm:main1} only provides a lower bound on the memory time, a natural question is whether
this bound is tight and, if so, what is the exact value of the constant coefficient $c$~?
To answer this question, the memory time of the 3D Cubic Code has been computed numerically
for a range of $\beta$'s and $L$'s. It should be emphasized that both Theorem~\ref{thm:main1} and our numerical simulation
use the same decoder at the readout step (see below).
Our simulation, described in detail in Section~\ref{sec:numerics}, provides a Monte Carlo estimate of the successful
decoding probability $p(t)$ as a function of the evolution time $t$. The decay of $p(t)$ was observed to follow
an exponential law, $p(t)=\exp(-t/\tau)$. The characteristic decay time $\tau$
was chosen as our numerical estimate of the memory time, that is, we set $T_{mem}=\tau$.
The scaling of $T_{mem}$ as a function of $L$ is shown on Fig.~\ref{fig:TvsL}.
 It suggests that $T_{mem}\approx L^{2.93 \beta-10.52}$
as long as $L\le L^*\approx e^{0.78\beta - 0.87}$. The numerical results strongly suggest that
our analytical bound Eq.~(\ref{Tmem-bound}) is tight up to constant coefficients (for the chosen decoder).
A comparison between the numerical data for $T_{mem}(\beta)$ and the analytical bound Eq.~(\ref{Tmem-bound1})
is shown on Fig.~\ref{fig:TvsBeta}. The numerics suggests a scaling $T_{mem}(\beta)\approx e^{1.69 \beta^2 +O(\beta)}$.
The numerical data were obtained for $26$ values of $\beta$ in the interval $4.0 \le \beta\le 5.25$ for each lattice size $5\le L \le 33$ satisfying $k(L)=2$ (that is, $L$ is odd and $L\ne 15$).
For each pair $(L,\beta)$ the memory time was estimated using a few hundred Monte Carlo samples.
Details on interpretation of the numerical data and the error analysis can be found in~Section~\ref{sec:numerics}.

The numerical computation of the memory time has become possible due
to our second result, which is an efficient error correction algorithm applicable to any topological stabilizer code.
We call it a  Renormalization Group (RG) decoder since it falls into a larger family
of error correction algorithms in which a syndrome is processed in a hierarchical way
using real-space renormalization methods~\cite{Harrington2004thesis,Duclos-CianciPoulin2009Fast}. Our algorithm and its rigorous analysis
borrow many ideas from~\cite{Harrington2004thesis}.
A detailed description of the RG decoder and its efficient implementation are described in
Section~\ref{sec:rgdecoder}. Let us briefly sketch the main idea of the algorithm.
The  RG decoding is a sequence of simple subroutines parameterized by integer
levels $p=0,1,\ldots,\log_2{L}$. At any given level $p$,
the decoder decomposes a syndrome
into disjoint connected clusters, where the connectivity is defined using $2^p$ as a unit of length.
This step can be implemented in time $O(N)$, where $N$ is the volume of the lattice.
The decoder then examines each cluster individually and tries
to `annihilate' it by a local Pauli operator. Clusters that cannot be annihilated are passed to the
next level (that is, $p+1$). The time needed to test whether
a given connected cluster can be annihilated depends on a particular code.
For the 3D Cubic Code we show how to perform this test in time  $O(V)$, where $V$ is the volume of the smallest rectangular
box enclosing the cluster. Once the decoder reaches the highest level, $p=\log_2{L}$, it returns the
product of annihilation operators over all clusters that have been successfully annihilated.
The overall running time of the RG decoder is $poly(N)$ for any topological stabilizer code.

The applicability of the RG decoder is by no means restricted to errors generated by thermal noise.
A natural question is how well the RG decoder performs against local stochastic noise.
In Appendix~\ref{sec:benchmark} we numerically benchmark the RG decoder for the 2D toric code,
where we consider random independent $X$ (or equivalently $Z$) errors with a uniform rate $p$.
Our numerical estimate of the error threshold is $p_c\approx 6.7\%$ which is reasonably close to the threshold $p_c\approx 11\%$ obtained using
the minimum weight matching decoding~\cite{DennisKitaevLandahlEtAl2002Topological}. Perhaps more importantly,
we prove that the RG decoder has a constant error threshold against local stochastic errors
for {\em any} topological stabilizer code, see Theorem~\ref{thm:universal-threshold} in
Appendix~\ref{sec:threshold}. To the best of our knowledge, the RG decoder is the first error correction
algorithm applicable to any topological stabilizer code that  combines good practical performance with
a rigorous lower bound on the error threshold and polynomial  worst-case running time.

\centerline{\bf Theorem~\ref{thm:main1}: sketch of the proof}

In Section~\ref{sec:thermal} we prove some basic facts about ``thermal errors" generated by
the dissipative Lindblad dynamics.
We define an energy barrier $\Delta(P)$ of a Pauli operator $P$ as the smallest
integer $m$ such that $P$ can be implemented by a stream of single-qubit errors
that does not create more than $m$ excitations during the process, see Section~\ref{subs:EnergyBarrier}.
The detailed balance condition and locality of quantum jump operators in the Lindblad
equation imply that thermal errors  are unlikely to have a high energy barrier.
More precisely, Lemma~\ref{lem:memory-time} in Section~\ref{subs:MemoryTime}
provides an upper bound on the storage error $\epsilon(t)$ assuming that any error with
energy barrier $\Delta(P)\le m$ will be corrected at the decoding step. Here $m$ is any fixed
``energy cutoff". The upper bound on $\epsilon(t)$ involves
the Boltzmann suppression factor $e^{-\beta m}$,
and an entropic factor that grows exponentially with the volume $N$ of the lattice.
We show that the entropic factor can be neglected in the regime $N e^{-\beta} \le 1$
which yields a bound
\begin{equation}
\label{sketch}
\epsilon(t)\le O(t) 2^{k(L)} N e^{-\beta m},
\end{equation}
see Section~\ref{subs:optimal}. Clearly, this bound is only useful if $m$ grows at least logarithmically with $N$.
At this point we must invoke the special properties of the 3D Cubic Code.
Our key technical ingredient is Theorem~2 of Ref.~\cite{BravyiHaah2011Energy}. It asserts (loosely) that
any Pauli operator $P$ creating a cluster of defects that contains a ``topologically charged'' sub-cluster
separated from its complement by distance $R$, has energy barrier at least $\Delta(P)\ge c\log{R}$ for some
constant coefficient $c$. Choosing the energy cutoff $m=c'\log{L}$ for some
constant $c'<c$, we can guarantee that a syndrome $S(P)$ caused by any Pauli operator $P$
satisfying $\Delta(P)\le m$,  consists of connected ``topologically neutral" clusters of defects,
where the connectivity is defined using $R\sim L^{c'/c}$ as a unit of length.
The RG decoder will identify each of these clusters as a connected component of the syndrome for
a sufficiently high RG level $p_0\approx \log_2{R}$. Since the clusters are topologically neutral, the RG decoder
will be able to annihilate each of them locally. This implies that the correcting
operator $P_{ec}(S)$ will annihilate all defects in $S$. Finally, we invoke the ``localization of high-level
errors lemma" (Lemma~2 in Ref.~\cite{BravyiHaah2011Energy}) to show that, in fact,
$P_{ec}(S)$ coincides with $P$ modulo stabilizers, see Section~\ref{subs:GoodErrors}.
Combining this observation and Eq.~\eqref{sketch}
we prove Theorem~\ref{thm:main1}.

It should be noted that the full hierarchy of the RG decoder is not needed for the proof of the theorem;
one can run the decoder at a single level ($p=p_0$), and skip all the other levels.
But, in practice, the hierarchical decoder performs better;
it is able to correct errors with slightly higher energy barriers.
Furthermore, the modified decoder using only one RG level would not have
a constant threshold against local stochastic errors, see Appendix~\ref{sec:threshold}.

\section{Previous work}
\label{sec:previous}

Here we briefly review alternative routes towards quantum self-correction
in topological memories proposed in the literature.
Most of them focus on  finding new mechanisms for suppressing diffusion of topological defects
(here and below we only consider zero-dimensional defects).
Arguably, the simplest of such mechanisms would be to have no topological defects in the first place.
Unfortunately, this seems to require four spatial dimensions.
The 4D toric code~\cite{DennisKitaevLandahlEtAl2002Topological} provides the only known example of a truly self-correcting quantum memory.
As was shown by Alicki and Horodecki's~\cite{AlickiHorodeckiHorodeckiEtAl2008thermal},
the memory time of the 4D toric code
grows exponentially with the lattice size for small enough bath temperature.
The first  3D topological memory in which diffusion of defects is constrained by superselection
rules was proposed by Chamon~\cite{Chamon2005Quantum}, see also~\cite{BravyiLeemhuisTerhal2010Topological}.
Topological defects in this model have a limited mobility restricted to certain
subspaces of $\mathbb{R}^3$ or have no mobility at all.
However, the model has no macroscopic energy barrier that could suppress the diffusion.
2D topological memories
in which diffusion of anyons is suppressed by effective long-range interactions
were studied by Chesi et al~\cite{ChesiRoethlisbergerLoss2009Self-Correcting}
and Hamma et al~\cite{HammaCastelnovoChamon2009ToricBoson}.
A quenched disorder and Anderson localization were proposed as
a means of suppressing propagation of defects at the zero temperature
by Wootton and Pachos~\cite{WoottonPachos2011disorder} and,
independently, by Stark et al~\cite{StarkImamogluRenner2011Localization},
see also~\cite{BravyiKoenig2011}.
A no-go theorem for quantum self-correction based on 3D stabilizer Hamiltonians
in which ground state degeneracy does not depend on lattice dimensions
was proved by Yoshida~\cite{Yoshida:2011}.
A different, very promising, line of research initiated
by Pastawski et al~\cite{PastawskiClementeCirac2011dissipation}
focuses  on quantum memories in which active error correction
is imitated by engineered dissipation driving the memory system
towards the ground state (as opposed to the Gibbs state).
Finally, let us emphasize that quantum  self-correction is technically different from
the thermal stability of topological phases,
see for instance~\cite{CastelnovoChamon2007Entanglement,
IblisdirEtAl2010thermal,
NussinovOrtiz2008Autocorrelations,
Hastings2011warmTQO}.
While the latter  attempts to establish the presence (or absence) of topological order
in the equilibrium thermal state, quantum self-correction is mostly concerned with the relaxation time
towards the equilibrium state.
\section{Analysis of thermal errors}
\label{sec:thermal}

We begin by setting up some terminology and notations.
A ground state of the memory Hamiltonian will be referred to as a {\em vacuum}.
It will be convenient to perform an overall energy shift such that the vacuum has zero energy.
A {\em Pauli operator} is an arbitrary tensor product of single-qubit Pauli operators
$X,Y,Z$ and the identity operators $I$. We will say that a Pauli operator creates
$m$ {\em defects} iff $P$ anticommutes with exactly $m$ stabilizer generators $G^X_c, G^Z_c$.
Equivalently, applying $P$ to the vacuum one obtains an eigenvector of $H$ with energy $m$.
For example,  using the explicit form of the generators, see Fig.~\ref{fig:CubicCode},
one can check that single-qubit $X$ or $Z$ errors create $4$ defects, while $Y$ errors create $8$ defects.
We will say that a Pauli error $P$ is corrected by the decoder iff $P_{ec}(S(P))=\gamma PG$,
where $S(P)$ is the syndrome of $P$, $G$ is a product of stabilizer generators  and $\gamma$ is an overall phase factor.
Let $N=2L^3$ be the total number of qubits. Note that $N$ is also the number of stabilizer generators.

\subsection{Energy barrier}
\label{subs:EnergyBarrier}

Let $\Gamma=(P_0,P_1,\ldots,P_t)$ be a finite sequence of Pauli operators
such that  the operators $P_i$ and $P_{i+1}$ differ on at most one qubit
for all $0\le i <t$. We say that $\Gamma$ is an {\em error path}
implementing a Pauli operator $P$ iff $P_0=I$ and $P_t=P$.
Let $m_i$ be the number of defects created by $P_i$.
The maximum number of defects
\[
m(\Gamma)=\max_{0\le i\le t} \; m_i
\]
will be called an {\em energy cost} of the error path $\Gamma$.
Given a Pauli operator $P$, we define its {\em energy barrier} $\Delta(P)$
as the minimum energy cost of all error paths implementing $P$,
\[
\Delta(P)=\min_\Gamma \; m(\Gamma).
\]
Although the set of error paths is infinite, the minimum always exists because the energy cost is a non-negative integer.
In fact, it suffices to consider paths in which $P_i$ are all distinct. The number of such paths is finite
since there are only finitely many Pauli operators for a given system size.

It is worth emphasizing that an operator $P$ may have a very large energy barrier even though
$P$ itself creates only a few defects or no defects at all.
Consider as an example the 2D Ising model, $H=-(1/2)\sum_{(u,v)} Z_u Z_v$, where the sum runs
over all pairs of nearest neighbor sites on the square lattice of size $L\times L$ with open boundary conditions.
Then the logical-$X$ operator $P=\bigotimes_u X_u$ has an energy barrier $\Delta(P)=L$ since any sequence of bit-flips implementing $P$ must create a domain wall across the lattice.

It is clear that a Pauli operator $P$ acting non-trivially on $n$ qubits has energy barrier at most $O(n)$.

\subsection{Lower bound on the memory time}
\label{subs:MemoryTime}

A naive intuition suggests that a stabilizer code Hamiltonian is a good candidate for being a self-correcting memory
if there exists an error correction algorithm, or a decoder, that corrects all errors with a sufficiently small energy
barrier. Errors with a high energy barrier   can confuse the decoder
and cause it to make wrong decisions,
but we expect that   such errors are unlikely to be created by the thermal noise. The following lemma makes this intuition more rigorous.
Let $f$ be the maximum energy barrier of Pauli operators that appear in the expansion of
the quantum jump operators $A_{\alpha,\omega}$ or $A_{\alpha,\omega}^\dag A_{\alpha,\omega}$,
see Section~\ref{subs:Davies}.
We will see later that $A_{\alpha,\omega}$ act on a constant number of qubits (see Proposition~\ref{prop:Lindblad} below).
It implies that $f=O(1)$. Below $m$ is an arbitrary energy cutoff.
\begin{lem}
\label{lem:memory-time}
Suppose an error correction algorithm $s \mapsto P_{ec}(s)$ corrects
any Pauli error $P$ with the energy barrier smaller than $m+2f$.
Let $\Phi_{ec}$ be the corresponding decoder defined by Eq.\eqref{decoder}.
Let $Q_m$ be the projector onto the subspace with at least $m$ defects. Then
\begin{equation}
\label{eq:infidelity}
\epsilon(t)\equiv \| \Phi_{ec}(\rho(t))-\rho(0) \|_1 \le
O(tN) \mathrm{Tr} \, Q_m e^{-\beta H}
\end{equation}
for any initial state $\rho(0)$ supported on the ground subspace of $H$.
The time evolution of $\rho(t)$ is governed by the Lindblad equation, Eq.~\eqref{eq:Markovian-master},
with the inverse temperature $\beta$ of the bath.
\end{lem}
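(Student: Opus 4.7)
The plan is to unravel the Lindblad evolution $e^{t\calL}\rho(0)$ into a mixture over quantum jump trajectories and then argue that essentially the only trajectories fooling the decoder are those that excite the system to $\ge m$ defects at some intermediate time. Concretely, I would write $e^{t\calL}\rho(0)$ as a Dyson series in the jump part $\sum_{\alpha,\omega} h(\alpha,\omega)\,A_{\alpha,\omega}\cdot A_{\alpha,\omega}^\dagger$ around the non-Hermitian background $e^{t\calL_0}$, with $\calL_0(\rho)=-i[H,\rho]-\tfrac12\{K,\rho\}$ and $K=\sum h(\alpha,\omega) A_{\alpha,\omega}^\dagger A_{\alpha,\omega}$. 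A trajectory $\gamma$ is a sequence of jump times $0<t_1<\cdots<t_k<t$ with labels $(\alpha_j,\omega_j)$; expanding each $A_{\alpha_j,\omega_j}$ in the Pauli basis gives a weighted Pauli string $P_\gamma$ applied to $\rho(0)$. Since each $A_{\alpha,\omega}$ acts on $O(1)$ qubits and has energy barrier at most $f$, concatenating the single-qubit paths of the individual jumps yields a natural error path for $P_\gamma$ whose energy cost is at most $m_\gamma + 2f$, where $m_\gamma$ is the largest defect count attained by any intermediate state of $\gamma$ and the $2f$ buffer absorbs transient excursions within a single jump.

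For any ``good'' trajectory with $m_\gamma<m$ the resulting Pauli error satisfies $\Delta(P_\gamma)<m+2f$ and hence is corrected by hypothesis: $\Phi_{ec}(P_\gamma\rho(0)P_\gamma^\dagger)=\rho(0)$. So after applying $\Phi_{ec}$ the good trajectories contribute exactly a scalar multiple of $\rho(0)$, and only trajectories that visit the $\ge m$ subspace during $[0,t]$ can perturb the state. A standard trace-distance estimate, using the TPCP property of the jump-conditioned channels to control the off-diagonal cross terms, then gives
\begin{equation*}
\epsilon(t) \;\le\; 2\,\mathrm{Pr}\bigl[\gamma\text{ visits }Q_m\text{ during }[0,t]\bigr],
\end{equation*}
and the remaining task is to bound this probability by $O(tN)\trace(Q_m e^{-\beta H})$.

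For that bound I would use a first-passage decomposition $\mathrm{Pr}[\gamma\text{ visits }Q_m]=\int_0^t R(s)\,ds$ and estimate $R(s)$ by the total rate of a single jump taking a state with fewer than $m$ defects into a state with at least $m$ defects. Summed over the $O(N)$ possible jump sites and over an eigenbasis of $H$, detailed balance $h(\alpha,-\omega)=e^{-\beta\omega}h(\alpha,\omega)$ lets me pair each up-jump matrix element with its conjugate down-jump, converting $\sum_{\alpha,\omega} h(\alpha,\omega)\|A_{\alpha,\omega}|E\rangle\|^2$ projected onto $Q_m$ into a Gibbs-weighted sum over final energies. The outcome is $R(s)\le O(N)\trace(Q_m e^{-\beta H})$ uniformly in $s$, and integrating over $s\in[0,t]$ yields the lemma. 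The main obstacle is precisely this nonequilibrium step: because we start from the ground subspace and not from $\rho_\beta$, stationarity of the Gibbs state cannot be invoked directly, and the detailed balance has to be applied at the level of individual trajectory weights via a time-reversal pairing, with care taken to ensure that the paired reverse trajectories sum to at most the Gibbs weight on $Q_m$ rather than something larger. All other ingredients---the Dyson expansion, the energy-barrier bookkeeping for concatenated paths, and the reduction from trace distance to bad-trajectory probability---are essentially routine once this estimate is in place.
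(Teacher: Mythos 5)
Your high-level structure mirrors the paper's: the paper also splits the Lindbladian, $\calL=\calL_-+\calL_+$, with $\calL_-$ built from $\Pi_- A_{\alpha,\omega}$ where $\Pi_-$ projects onto energy $<m+f$; uses the Duhamel formula $e^{\calL t}=e^{\calL_- t}+\int_0^t e^{\calL_-(t-s)}\calL_+ e^{\calL s}\,ds$ (the operator-level version of your jump-trajectory unraveling); proves that $\Phi_{ec}\circ e^{\calL_- t}$ acts as the identity on the code space by exactly your energy-barrier bookkeeping for concatenated single-qubit paths (Prop.~\ref{prop:Lp}); and then bounds the ``bad'' term using $\calL_+(X)=\calL_+(Q_m X Q_m)$ and $\|\calL_+\|_1=O(N)$. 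Up to this point the two arguments are equivalent, and your reduction $\epsilon(t)\le O(tN)\max_s\trace(Q_m\rho(s))$ is essentially Eq.~\eqref{epsilon_bound}.

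The genuine gap is exactly the step you flag as ``the main obstacle,'' and what you sketch for it will not close cleanly. A trajectory-level time-reversal pairing reweights a forward path by $e^{-\beta E_{\text{final}}}$ relative to its reverse (using $E_g=0$), but conditioning on \emph{first passage} into $Q_m$ does not produce a quantity controlled by $\trace(Q_m e^{-\beta H})$, because the trajectory may subsequently relax back down and the reverse paths do not start from a Gibbs-weighted ensemble. The paper sidesteps all of this at the operator level: it writes $\rho_0=\calZ_\beta\rho_\beta\rho_0$ (possible since the ground energy is $0$), passes to the Heisenberg picture $\trace(Q_m e^{\calL s}\rho_0)=\calZ_\beta\langle\rho_0,e^{\calL^* s}(Q_m)\rangle_\beta$, uses the Liouville inner product $\langle X,Y\rangle_\beta=\trace(\rho_\beta X^\dag Y)$ in which detailed balance makes $\calL^*$ self-adjoint (Eqs.~\eqref{eq:liouville-innerproduct}--\eqref{id1}), moves $e^{\calL^* s}$ onto $\rho_0$, and finally invokes unitality of $e^{\calL^* s}$ together with $\rho_0\le I$ to get $\trace(e^{\calL^* s}(\rho_0)\,Q_m e^{-\beta H})\le\trace(Q_m e^{-\beta H})$. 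This is a strictly state-level argument with no trajectory combinatorics; the self-adjointness plus unitality is the mechanism that converts the nonequilibrium quantity into a Gibbs-weighted one, and that is the ingredient your proposal is missing. Without it, your $R(s)\le O(N)\trace(Q_m e^{-\beta H})$ is an unproved assertion rather than a lemma.
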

\begin{proof}
We begin with the following simple observation. Let $\hat{H}$ be the linear map
taking the commutator with $H$, that is, $\hat{H}(X)=HX-XH$.
\begin{prop}
\label{prop:com}
The map $\hat{H}$ commutes with the Lindblad generator $\calL$.
The time-evolved state $\rho(t)$ commutes with the memory Hamiltonian $H$.
In addition, $\rho(t)=e^{\calL t}(\rho(0))$.
\end{prop}
\begin{proof}
Since $A_{\alpha,\omega}$ transfers energy $\omega$ from the memory to the bath, we have the following
identity:
\begin{equation}
\label{Acom}
A_{\alpha,\omega} H = H A_{\alpha,\omega} + \omega A_{\alpha,\omega} \quad \mbox{and} \quad
A_{\alpha,\omega}^\dag H = H A_{\alpha,\omega}^\dag - \omega A_{\alpha,\omega}^\dag.
\end{equation}
From this  one easily gets $\hat{H} \calL=\calL \hat{H}$. It follows that
\[
\rho(t)=e^{-i\hat{H} t + \calL t}(\rho(0)) =e^{\calL t}\circ  e^{-i\hat{H} t} (\rho(0))= e^{\calL t}(\rho(0)),
\]
since the initial state $\rho(0)$ commutes with $H$. It also implies that $\rho(t)$ commutes with $H$.
\end{proof}

Let $\calD = \ker \hat H$ be the set of all operators that are commuting with the Hamiltonian $H$.
Every operator in $\calD$ is block-diagonal in the energy eigenstate basis.
Since $\rho(0)$ is supported on the ground subspace of $H$, we have $\rho(0) \in \calD$.
Below we only consider states from $\calD$ and linear maps preserving $\calD$.
Define an orthogonal  identity decomposition
\[
I=\Pi_- + \Pi_+
\]
where $\Pi_-$ projects onto the subspace with energy $<m+f$
and $\Pi_+$ projects onto the subspace with energy $\ge m+f$.
Introduce auxiliary Lindblad generators
\begin{equation}
\calL_-( \rho)=\sum_\alpha \sum_\omega h(\alpha,\omega) \left(
B_{\alpha,\omega} \rho B_{\alpha,\omega}^\dag - \frac12 \{  \rho,B_{\alpha,\omega}^\dag B_{\alpha,\omega} \} \right),
\quad \mbox{where} \quad B_{\alpha,\omega} = \Pi_- A_{\alpha,\omega}
\label{Lp}
\end{equation}
and
\begin{equation}
\calL_+( \rho)=\sum_\alpha \sum_\omega h(\alpha,\omega) \left(
C_{\alpha,\omega} \rho\,  C_{\alpha,\omega}^\dag - \frac12 \{  \rho,C_{\alpha,\omega}^\dag C_{\alpha,\omega} \} \right),
\quad \mbox{where} \quad C_{\alpha,\omega} = \Pi_+ A_{\alpha,\omega}.
\label{Lq}
\end{equation}
Simple algebra shows that $\calL_-$ and $\calL_+$ preserve $\calD$ and
their restrictions on $\calD$ satisfy
\begin{equation}
\label{LPQ}
\calL=\calL_- + \calL_+.
\end{equation}
It is useful to note that any $X \in \calD$ commutes with $\Pi_\pm$.
By abuse of notations, we shall
apply Eq.\eqref{LPQ} as though it holds for all operators.
\begin{prop}
\label{prop:Lp}
For any time $t\ge 0$ and for any state $\rho_0$ supported on the ground subspace of $H$ one has
\begin{equation}
\Phi_{ec}(e^{\calL_- t}(\rho_0))=\rho_0.
\end{equation}
\end{prop}
\begin{proof}
Since all maps are linear, we may assume $\rho_0 =\ket{g}\bra{g}$ is a pure state.
Then, $e^{\calL_- t}(\rho_0)$ is in the span of states of form $\ket{\psi}=\Pi_- E_n \cdots \Pi_- E_2 \Pi_- E_1 \ket{g}$,
where $E_i$ are Pauli operators that appears in the expansion of $A_{\alpha,\omega}$
or $A_{\alpha,\omega}^\dag A_{\alpha,\omega}$.
This follows from the Taylor expansion of $e^{\mathcal{L}_- t}$.
Since Pauli errors map eigenvectors of $H$ to eigenvectors of $H$, we conclude that
either $\ket{\psi}=0$, or $\ket{\psi}=E_n \cdots E_2 E_1 \ket g$.
Furthermore, the latter case is possible only if the Pauli operator $E\equiv E_n \cdots E_2 E_1$
has energy barrier smaller than $m+2f$. Indeed, definition of $\Pi_-$ implies that $E_j E_{j-1} \cdots E_1$
creates at most $m+f-1$ defects for all $j=1,\ldots,n$. By assumption, each operator $E_j$ can be implemented by an error
path with energy cost at most $f$. Taking the composition of all such error paths
one obtains an error path for $E$ with energy cost at most $m+2f-1$
and thus  $\Phi_{ec}$ will correct $E$.
Since $\Phi_{ec} \  \circ \  e^{\calL_- t}$ is a TPCP map, we must have $\Phi_{ec}(e^{\calL_- t}(\rho_0))=\rho_0$.
\end{proof}
\begin{prop}
\label{prop:exp}
For any decomposition $\calL=\calL_- + \calL_+$ one has the following identity:
\begin{equation}
e^{\calL t} = e^{\calL_- t} + \int_0^t ds \, e^{\calL_- (t-s)} \calL_+ \, e^{\calL s}.
\label{exp}
\end{equation}
\end{prop}
\begin{proof}
Use the identity
\[
\frac{d}{ds} e^{\calL_-(t-s)} e^{\calL s} = e^{\calL_- (t-s)} (-\calL_- + \calL) e^{\calL s} =  e^{\calL_- (t-s)} \calL_+ \, e^{\calL s}.
\]
\end{proof}
\begin{prop}[\cite{ChesiLossBravyiEtAl2009Thermodynamic}]
\label{prop:Lindblad}
Each operator $A_{\alpha,\omega}$ acts non-trivially only on $O(1)$ qubits.
Furthermore,
\[
 \| \calL_+ \|_1 \equiv \sup_X \frac{ \| \calL_+(X) \|_1 }{ \| X \|_1 }  = O(N).
\]
\end{prop}
\begin{proof}
Let $A_{\alpha}(t)=e^{iHt}A_\alpha e^{-iHt}$ such that $A_{\alpha,\omega}$
is the Fourier component of $A_\alpha(t)$ with the Bohr frequency  $\omega$,
that is, $A_\alpha(t)=\sum_\omega e^{-i\omega t} A_{\alpha,\omega}$.
Recall that each operator $A_\alpha$ acts on $O(1)$ qubits.
Since $H$ is a sum of pairwise commuting terms, we can represent
$A_\alpha(t)$ as  $A_\alpha(t)=e^{iH_\alpha t} A_\alpha e^{-iH_\alpha t}$
where $H_\alpha$ is obtained from $H$ by retaining only those stabilizer generators that do not
commute with $A_\alpha$. All such generators must share at least one qubit with $A_\alpha$.
Therefore  $A_\alpha(t)$ and $A_{\alpha,\omega}$ act non-trivially only $O(1)$ qubits.
Furthermore, since $H_\alpha$ has $O(1)$ distinct integer eigenvalues,
$A_\alpha(t)$ has only $O(1)$ distinct Bohr frequencies.
The bound $\| A_{\alpha,\omega} \|\le 1$ follows trivially from our assumption $\| A_\alpha\|\le 1$.
The norm of $\mathcal{L}_+$ is then
bounded from Eq.~\eqref{eq:lindbladian} using triangle inequality, $\| X Y \|_1 \le \| X \| \cdot \| Y \|_1$, and Eq.~\eqref{eq:weak}.
\end{proof}

Now we are ready to bound $\epsilon(t)$. Let $\rho_0\equiv \rho(0)$.
First, using Proposition~\ref{prop:com} one gets $\rho(t) = e^{\calL t} (\rho_0)$.
Applying Propositions~\ref{prop:Lp},\ref{prop:exp} one arrives at
\begin{equation}
\epsilon(t)\le \int_0^t ds\, \| e^{\calL_-(t-s)}  \calL_+ \, e^{\calL s} (\rho_0) \|_1
\le t \cdot \max_{0\le s\le t} \; \|  \calL_+ \, e^{\calL s} (\rho_0) \|_1.
\end{equation}
We shall use an identity
\begin{equation}
\calL_+(X)=\calL_+( Q_m X Q_m)
\end{equation}
valid for any $X \in \calD$. Indeed,
any quantum jump operator $A_{\alpha,\omega}$ changes the energy at most by $f$,
so that $\calL_+(Q_m^\perp X)=\calL_+(XQ_m^\perp)=0$  for any $ X \in \calD$ (note that any $ X \in \calD$
commutes with $Q_m$).
We arrive at
\begin{equation}
\epsilon(t)
\le t \cdot \| \calL_+(Q_m e^{\calL s} (\rho_0) Q_m) \|_1
\le t \cdot \| \calL_+ \|_1 \cdot \| Q_m e^{\calL s} (\rho_0)  Q_m \|_1
\le O(tN) \trace{ Q_m  e^{\calL s} (\rho_0)},
\label{epsilon_bound}
\end{equation}
where the maximization over $s$ is implicit.
Since the ground state energy of $H$ is zero, one has
\begin{equation}
\rho_0=\calZ_\beta \rho_\beta \rho_0,
\end{equation}
where $\calZ_\beta$ is the partition function. It yields
\begin{equation}
\trace{ Q_m  e^{\calL s} (\rho_0)}
= \trace{  \rho_0 \, e^{\calL^* s}(Q_m)}
= \calZ_\beta \trace{ \rho_\beta \rho_0 e^{\calL^* s} (Q_m)}
= \calZ_\beta \langle \rho_0, e^{\calL^* s} (Q_m) \rangle_\beta,
\end{equation}
where we introduced the Liouville inner product $\langle X,Y\rangle_\beta \equiv \trace{\rho_\beta X^\dag Y}$.
We denoted by $\calL^*$ the linear map adjoint to $\calL$ (the one describing time-evolution in the Heisenberg picture) with respect to the Hilbert-Schmidt inner product.
The detailed balance condition Eq.~\eqref{DB} implies that
the linear map $\calL^*$ is self-adjoint with respect to the Liouville inner product,
\begin{equation}
\langle X, \calL^*(Y)\rangle_\beta= \langle \calL^*(X),Y\rangle_\beta,
\label{eq:liouville-innerproduct}
\end{equation}
see~\cite{AlickiHorodeckiHorodeckiEtAl2008thermal}. It can be easily checked using the identities
\begin{equation}
\label{id1}
A_{\alpha,\omega} \rho_\beta = e^{-\beta \omega} \rho_\beta A_{\alpha,\omega} \quad \mbox{and} \quad
A_{\alpha,-\omega} = A_{\alpha,\omega}^\dag
\end{equation}
and Eq.~(\ref{DB}). It follows that the map $e^{\calL^* s}$ is also self-adjoint with respect to the Liouville inner product. Hence, we have
\begin{equation}
\trace{ Q_m  e^{\calL s} (\rho_0)}
=   \calZ_\beta \langle \rho_0, e^{\calL^* s} (Q_m) \rangle_\beta
=   \calZ_\beta \langle e^{\calL^* s} (\rho_0), Q_m \rangle_\beta
=   \trace{e^{\calL^* s}(\rho_0) Q_m e^{-\beta H}}
\le \trace{Q_m e^{-\beta H}}.
\label{trace_final}
\end{equation}
Here the last inequality follows from the fact that $e^{\calL^* s}$ is a unital CP map
and from $\rho_0 \le I$.
\end{proof}

\subsection{Choosing the optimal number of qubits}
\label{subs:optimal}

Let us fix the inverse temperature $\beta$ and ask what is the optimal system size $N$
that minimizes the upper bound on the storage error $\epsilon(t)$ derived in
Lemma~\ref{lem:memory-time}.
The dimension of the subspace with exactly $n$ defects equals $2^{k(L)}$ times
the  binomial $N$ choose $n$ coefficient since the total number of potential defects locations is $N$
(recall that the 3D Cubic Code has exactly $N$ stabilizer generators).
Choosing any constant $0<a<1$ and setting $k\equiv k(L)$ we obtain
\begin{equation}
\label{eq:tradeoff}
\epsilon(t)
\le O(t) N 2^{k} e^{-a\beta m}  \sum_{n\ge m} \binom{N}{n} e^{-(1-a)\beta n}
\le O(t) N 2^k e^{-a\beta m}  (1+e^{-(1-a)\beta})^N.
\end{equation}
The Boltzmann factor $e^{-a\beta m}$ is responsible for the self-correcting behavior. It supports our initial
intuition that errors with a high energy barrier (at least $m$) are unlikely to be generated by the thermal noise.
On the other hand, the factor $(1+e^{-(1-a)\beta})^N$ represents the entropy contribution. Loosely speaking, it takes into
account the  number of error paths with a given energy cost. The optimal choice of $N$ is determined by the
tradeoff between the  Boltzmann factor and the entropy factor which depends on the scaling of the energy barrier
$m$. For example, suppose one chooses $N\le e^{(1-a)\beta}$ such that the entropy factor is upper bounded by
a constant. Then Eq.~\eqref{eq:tradeoff} yields
\begin{equation}
 \epsilon(t)\le  O(t) N 2^k e^{-a\beta m} \quad \mbox{for all $N\le e^{(1-a)\beta}$}.
\label{eq:tradeoff1}
\end{equation}
In Section~\ref{sec:Correctability}, we will argue that the 3D Cubic Code
(and any other stabilizer code having the topological order
and obeying the no-strings rule as stated in~\cite{BravyiHaah2011Energy}) admits an efficient error correction algorithm that
corrects all errors with the energy barrier smaller than $m = c \log{(N)}$ for some constant $c>0$.
For such codes one gets
\begin{equation}
\epsilon(t)\le  O(t) \cdot 2^k \cdot N^{1-ac \beta} \quad \mbox{for all $N\le e^{(1-a)\beta}$}.
\label{eq:tradeoff2}
\end{equation}
This is equivalent to the statement of Theorem~\ref{thm:main1}.
\section{Renormalization group decoder}
\label{sec:rgdecoder}

Recall that the syndrome measurement reveals locations of defects (flipped stabilizer generators)
created by an unknown error. The renormalization group (RG) decoder attempts to annihilate the defects comprising
the syndrome $S$ by dividing them into disjoint connected clusters $S=C_1\cup \ldots \cup C_m$ and then trying to annihilate each cluster $C_a$  individually.
More specifically, the decoder checks whether $C_a$ can be annihilated by a Pauli operator  $P_a$
supported  on a sufficiently small spatial region  $b(C_a)$ enclosing $C_a$.
If such local annihilation operator $P_a$ exists, the decoder  updates the syndrome by erasing
all the defects comprising $C_a$, records the operator $P_a$, and moves on to the next cluster.
If $C_a$ cannot be annihilated, the decoder skips it.
In general, the annihilation operator $P_a$ might not be unique.
However, if the enclosing region $b(C_a)$ is small enough to ensure that no logical operator
can be supported on $b(C_a)$, all annihilation operators $P_a$ must be equivalent modulo stabilizers
and the choice of $P_a$ does not matter.

After all clusters $C_a$ have been examined, the decoder is left with a new configuration of defects
$S'$, which is typically smaller than the original one.
If no defects are left, i.e., $S'=\emptyset$, the decoder stops and returns the product of all recorded Pauli operators $P_a$.
If $S'\ne \emptyset$,  the decoder applies a scale transformation
increasing the unit of length by some constant factor and repeats all the above steps starting from the syndrome $S'$.
The scale transformation potentially merges several unerased clusters $C_a$ into a single connected cluster
whereby giving the decoder one more attempt to annihilate them.

The full decoding algorithm is the iteration of partitioning the defects into the connected clusters and calculating the annihilation operators. It declares failure and aborts if the recorded operator cannot annihilate all the defects before the rescaled unit length is comparable to the lattice size.

A detailed  implementation of the RG decoder must  be tailored to a specific lattice geometry
and  a stabilizer code under consideration.
It must include a precise definition of  the connected clusters of defects $C_a$
and the enclosing regions $b(C_a)$. It must also include an algorithm for choosing the annihilation
operators $P_a$, a schedule for increasing the unit of length,
and clearly stated conditions under which the decoder aborts.
In the rest of this section we describe an efficient implementation of the RG decoder
for arbitrary 3D stabilizer codes satisfying topological order conditions defined in~\cite{BravyiHaah2011Energy}.
The only part of this implementation specialized for the 3D cubic code is the ``broom algorithm" of Section~\ref{subs:broom}.

\subsection{Metric convention and terminology}

Let $\Lambda$ be the regular 3D cubic lattice of linear size $L$ with periodic boundary conditions
along all coordinates $x,y,z$.
We shall label sites of $\Lambda$ by triples of integers $(i,j,k)$ defined modulo $L$
and measure the distance between sites using the $\ell_\infty$-metric. In other words,
the distance $d(u,v)$ between a pair of sites $u$ and $v$ is the smallest
integer $r$ such that $u$ and $v$ can be enclosed by a cubic box with dimensions
$r\times r\times r$. For example, $d(u,v)=1$ whenever $u$ and $v$ belong to the same
edge, plaquette, or elementary cube of the lattice.
Each site of $\Lambda$ represents one or several physical qubits (two qubits for the 3D Cubic Code).
Each elementary cube $c$ represents a spatial location of one or several  stabilizer generators
(two generators for the 3D Cubic Code). A generator located at cube $c$ may act only on
qubits located at vertices of $c$. We shall label each elementary cube by
coordinates of its center, the triple of half-integers $(i,j,k)$ defined modulo $L$.
The distance $d(c,c')$ between a pair of cubes $c$ and $c'$ is the
distance between their centers.    For example, $d(c,c')=1$ whenever $c$ and $c'$ share a vertex, an edge, or a plaquette.

A defect is a stabilizer generator whose eigenvalue has been flipped as a result of the error.
We shall use a term {\em cluster of defects}, or simply {\em cluster}
for any set of defects.  Define the {\em diameter} of a cluster $d(C)$ as
the maximum distance $d(c,c')$ where $c,c'\in C$.
Here and below  the distance between defects is defined as the distance between the cubes
occupied by these defects.
Given two non-empty clusters $C$ and $C'$,
define a distance $d(C,C')$ as the minimum distance $d(c,c')$ where $c\in C$ and $c'\in C'$.
Given an integer $r$, we shall say that a cluster $C$ is \emph{connected at scale $r$},
or simply \emph{$r$-connected}, if $C$
cannot be partitioned into two proper subsets $C=C'\cup C''$ such that $d(C',C'')>r$.
A maximal $r$-connected subset of a cluster $C$  is called a {\em $r$-connected component} of $C$.
The {\em minimal enclosing box} $b(C)$ of a cluster $C$ is the smallest
rectangular box $B$ enclosing all defects of $C$ such that all vertices of $B$ are dual sites
of $\Lambda$.  Note that the minimal enclosing box $b(C)$ is unique as long as $d(C)<L/2$
(if a cluster $C$ has diameter $L/2$,  one may have two boxes with the same dimensions  enclosing $C$ that `wrap' around the lattice in two different ways).

\subsection{Topological stabilizer codes}

The RG decoder can be applied to any stabilizer code satisfying topological order
conditions defined in~\cite{BravyiHaah2011Energy}. For the sake
of completeness, we state these conditions below. Let $\cal G$ be the Abelian group
generated by the stabilizer generators. Elements of $\cal G$ will be called stabilizers.
Let $S(P)$ be the syndrome of a Pauli operator $P$, that is,
the set of all stabilizer generators anticommuting with $P$.
The syndrome can be viewed as a cluster of defects.
\begin{defn}
\label{dfn:TQO}
A stabilizer code has \emph{topological order at scale $\ltqo$} if any Pauli operator supported on a cube of linear size $\ltqo$
satisfies the following conditions:\\
(1) If $P$ commutes with all stabilizers then $P$ is stabilizer itself;\\
(2) If $S(P)\ne \emptyset$ then there is a stabilizer $G\in \cal G$  such that $PG$ is supported on the $1$-neighborhood of
the minimum enclosing box $b(S(P))$.\\
A family of stabilizer codes $\{\mathcal{C}(L)\}_L$ is called topological if there exists a constant $\gamma>0$ such that each code $\mathcal{C}(L)$ has topological order at scale $\ltqo \ge L^\gamma$.
\end{defn}
The 3D Cubic Code satisfies this topological order condition with
$\ltqo = \half L$, see~\cite{Haah2011Local}.
Strictly speaking $\ltqo = L-2$ for the 3D Cubic code,
but in order to avoid unnecessary complications due to boundaries,
we always assume that $\ltqo \le \half L$.
Below we consider only topological stabilizer codes.
\begin{defn}
 A cluster of defects $C$ is called \emph{neutral} if it can be created from the vacuum by a Pauli operator $P$ supported on a cube of linear size $\ltqo$. Otherwise, the cluster is said to be \emph{charged}.
\end{defn}
For example, the 2D toric code~\cite{Kitaev2003Fault-tolerant} has two types of defects: magnetic charges (flipped plaquette operators) and electric charges (flipped star operators).
A cluster of defects $C$ is neutral iff $C$ contains even number of magnetic charges and even number of electric charges.

It follows from Definition~\ref{dfn:TQO}
that any neutral cluster of defects $C$ can be annihilated by a Pauli operator supported on the $1$-neighborhood
of the minimum enclosing box $b(C)$.

\subsection{RG decoder}
\label{subs:rgdecoder}

We are now ready to define our RG decoder precisely.
Recall that $d(C)$ is the diameter of a cluster $C$, and $\ltqo \le \half L$ by convention.
\begin{center}
\fbox{
\parbox{16cm}{{\bf TestNeutral}
\begin{flushleft}
{\bf Input} $S$ : a set of defects, {\bf Output} $P$ : a Pauli operator.\\
1. Compute the minimal enclosing box $B$ of $S$.\\
2. {\bf if} $d(B) > \ltqo$, {\bf then} {\bf return} $I$.\\
3. Try to compute a Pauli $P$ supported on the 1-neighborhood of $B$ such that $S(P) = S$.\\
4. {\bf if} a consistent $P$ is found {\bf then} {\bf return} $P$ {\bf else} {\bf return} $I$.
\end{flushleft}
}}
\end{center}
The topological order condition of Definition~\ref{dfn:TQO} implies that
TestNeutral computes the correcting Pauli operator for any neutral cluster.
Step~1 is easy as we discuss in the end of Section~\ref{subs:clusters}.
The specification of Step~3 depends on the code, but it always has an efficient implementation
using the standard stabilizer formalism~\cite{Gottesman1998Theory}.
In general, the condition $S(P)=S$ can be described by
a system of $O(V)$ linear equations over $O(V)$ binary variables parameterizing $P$,
where $V$ is the volume of $B$.
The running time is then $O(V^3)$ by the Gauss elimination.
In the special case of the 3D Cubic Code,
there is a much more efficient algorithm running in time $O(V)$
which we describe in Section~\ref{subs:broom}.

Let $p_{M}$ be the largest integer such that $2^{p_M} < \min(\half L,\ltqo)$.
For any integer $0\le p\le p_M$, we define
the \emph{level-$p$ error correction} {\bf EC($p$)}:
\begin{center}
\fbox{
\parbox{16cm}{{\bf EC($p$)}
\begin{flushleft}
{\bf Input} $S$ : a syndrome, {\bf Output} $P$ : a Pauli operator.\\
1. Partition $S$ into $2^p$-connected components: $S=C_1\cup \ldots \cup C_m$.\\
2. For each component, compute $P_a = $ TestNeutral$(C_a)$.\\
3. {\bf return} the product $P_1 P_2 \cdots P_m$.
\end{flushleft}
}}
\end{center}
The overall running time of EC($p$)  is polynomial in the number of qubits $N$.
Step~1 can be done, for example, by examining the distance between all pairs of defects,
forming a graph whose edges connect pairs of defects separated by distance $\le 2^p$, and finding
connected components of this graph.
A more efficient algorithm with the running time $O(N)$
is described in Section~\ref{subs:clusters}.
Since $V=O(N)$ and $m=O(N)$,
we see that the worst-case running time of EC($p$) is $O(N^2)$.
For instance, one can consider nested boxes,
near the faces of which many defects lie.
However, clusters are created from the vacuum
with some probability which we expect to be smaller than, say, $\half$.
So the number of clusters that have overlapping minimal enclosing box
appears with exponentially small probability.
On average, the running time of EC($p$) will be $O(N)$.

The full RG decoding algorithm is as follows.
\begin{center}
\fbox{
\parbox{16cm}{{\bf RG Decoder}
\begin{flushleft}
{\bf Input} $S$ : the syndrome, {\bf Output} $P_{ec}$ : a Pauli operator.\\
1. Set $P_{ec}=I$.\\
2. \parbox[t]{15cm}{{\bf for} $p=0$ {\bf to}  $p_M$ {\bf do}\\
\makebox[5mm]{}  \parbox[t]{14cm}{ Let $Q=$EC($p$)($S$).}\\
\makebox[5mm]{}  \parbox[t]{14cm}{Update $P_{ec} \leftarrow P_{ec} \cdot Q$ and $S\leftarrow S\oplus S(Q)$.}\\
\parbox[b]{5cm}{\bf end for}} \\

\vspace{3mm}

3. {\bf if} $S=0$ {\bf then} return $P_{ec}$ {\bf else} declare failure.
\end{flushleft}
}}
\end{center}
Here the notation $S\oplus S(Q)$ stands for the symmetric difference of the sets
$S$ and $S(Q)$ or addition modulo two, if the syndromes are represented by binary strings.
To conform with conventions made in Section~\ref{subs:decoding},
we may let the decoder on failure return an arbitrary Pauli operator
that is consistent with the measured syndrome.
Of course, this is no better than initializing the memory with an arbitrary state.
The discussion above implies that the RG decoder has running time $O(N^2\log{N})$ in the worst case,
and $O(N\log{N})$ in the case of sparse syndromes.

\subsection{Cluster decomposition}
\label{subs:clusters}
\newcommand{\ls}{r}

Given a length scale $\ls$, the cluster decomposition of defects
is to partition the defects into maximally connected subsets
(connected components) of the syndrome at scale $\ls$.
Naively, the task to compute the decomposition of all the defects into clusters
will take time $O(m^2)$ where $m$ is the total number of defects.
The density of defects will typically be constant irrespective of the system size,
and the computation time for decomposition will be $O(N^2)$,
where $N$ is the volume of the system.
However, by exploiting the geometry of simple cubic lattice,
we can do it in time $O(N)$.
This is the optimal scaling
since we have to sweep through the whole system anyway
to identify the position of defects.

If $\ls = 1$, the problem is to label the connected components of binary array \cite{AsanoTanaka2010,KiranEtAl2011Labeling}.
Given a defect $u_0$, we can compute the connected component
containing $u_0$ in time $O(m)$ where $m$ is the number of defects in the component.
One prepares an empty queue (first-in-first-out data structure),
and puts $u_0$ into it.
The subsequent computing is as follows:
(i) Pop out the first element $u$ from the queue,
and of the neighborhood put the \emph{unlabeled} defects into the queue and label them.
(ii) Repeat until the queue becomes empty.
Every defect in a connected component $j$ is stored in the queue only once.
Hence, this process computes the component $j$ of a given defect
in time proportional to the number $m_j$ of defects in $j$.
One examines the whole system in some order
and finds the connected component whenever there is an unlabeled defect.
The total computation time is proportional to $N + O(1)\sum_j m_j = O(N)$,
since the connected components are disjoint.

For $\ls > 1$, the algorithm begins by dividing the whole lattice
into boxes of linear size $\ls$ or smaller.
The defects in a box certainly belong to a single connected component
(recall that we use the $l_\infty$ metric).
The defects in the boxes $B, B'$ belong to the same component
if and only if there is a pair $u \in B$, $v \in B'$ of defects
such that $d(u,v) \le \ls$.
In other words, we evaluate the binary function
\[
 \delta(B, B') =
\begin{cases}
1 & \text{if there are $u \in B, v \in B'$ such that } d(u,v) \le \ls ,\\
0 & \text{otherwise}
\end{cases}
\]
for each neighbor $B'$ of $B$;
if $B$ does not meet $B'$,
we know that $\delta(B,B') = 0$.

Given the table of $\delta$,
we can finish computing the decomposition
in time $O((L/r)^D)$ as in the $\ls = 1$ case.
We show that the computation of $\delta(B,B')$ can be done
in time $O(r^D)$ where $D$ is the dimension of the lattice.
Then, the total time to compute the table of $\delta$ for all adjacent boxes
will be $O(r^D (L/r)^D )=O(N)$.
Let $B$ and $B'$ be adjacent.
For clarity of presentation, we restrict to $D=2$.
Suppose $B$ and $B'$ meet along an edge parallel to $x$-axis.
Since any difference $|x-x'|$ of $x$-coordinates
of the defects in $B\cup B'$ is at most $\ls$,
we only need to compare $y$-coordinates.
That is, the problem is reduced to one dimension.
It suffices to pick two defects from $B,B'$ respectively
that are the closest to the $x$-axis.
If the $y$-coordinates differ at most by $\ls$,
then $\delta(B,B') = 1$;
otherwise, $\delta(B,B')=0$.

Suppose $B$ and $B'$ meet at a vertex.
Without loss of generality, we assume $B$ is in the third quadrant,
and $B'$ is in the first quadrant.
Define a binary function $\delta'$ on $B'$ as
\[
 \delta'(i,j)=
\begin{cases}
 1 & \text{if there is a defect $(x,y) \in B'$ where $ x \le i$ and $ y \le j$}, \\
 0 & \text{otherwise}.
\end{cases}
\]
The function table of $\delta'$ is computed in time $O(\ls^2)$.
It is important to note that $\delta'(i,j) = 1$ implies $\delta'(i+1,j) = \delta'(i,j+1)=1$.
One starts from the origin and sets $\delta'(\half,\half)=1$
if there is a defect at $(\half,\half)$; otherwise $\delta'(\half,\half)=0$.
Here, $(\half,\half)$ means the elementary square in the first quadrant
that is the closest to the origin.
Proceeding by a lexicographic order of the coordinates,
one sets $\delta'(i,j) = 1$
if $\delta'(i-1,j)=1$, or $\delta'(i,j-1)=1$, or there is a defect at $(i,j)$;
otherwise $\delta'(i,j)=0$.
It is readily checked that this procedure correctly computes $\delta'$.
Equipped with this $\delta'$ table,
we can immediately test for each defect in $B$
whether there is a defect in $B'$ within distance $\ls$.
Thus, we have computed $\delta(B,B')$ in time $O(r^2)+O(m)$
where $m$ is the number of defects in $B$,
which is at most $O(r^2)$.
The computation of $\delta$ in higher dimensions is similar.

The computation of the minimal enclosing box for each cluster is also efficient.
Given the coordinates of the $m$ points in the cluster,
we read out, say, $x$-coordinates $x_1,\ldots,x_m$.
The minimal enclosing interval $B_x$ of $x_1,\ldots,x_m$ under periodic boundary conditions,
is the complement of the longest interval between consecutive points $x_i$ and $x_{i+1}$
which can be computed in time $O(m)$.
$B_x$ is unambiguous if the diameter of the cluster is smaller than $L/2$.
The minimal enclosing box is the product set $B_x \times B_y \times B_z$,
whose vertices are computed in time $O(m)$.

\subsection{Broom algorithm}
\label{subs:broom}

Now we describe an efficient algorithm for the 3D cubic code
that tests whether a cluster is neutral. If the test is positive,
the algorithm also returns a
Pauli operator $E$ that annihilates the cluster.
Recall that the 3D cubic code has property
that if the cluster is neutral, $E$ is supported in the minimal box that encloses the cluster.
This is because the erasure process \cite{Haah2011Local} is possible
as long as it does not encounter a defect.

\begin{figure}
\centering
\includegraphics[width=.4\textwidth]{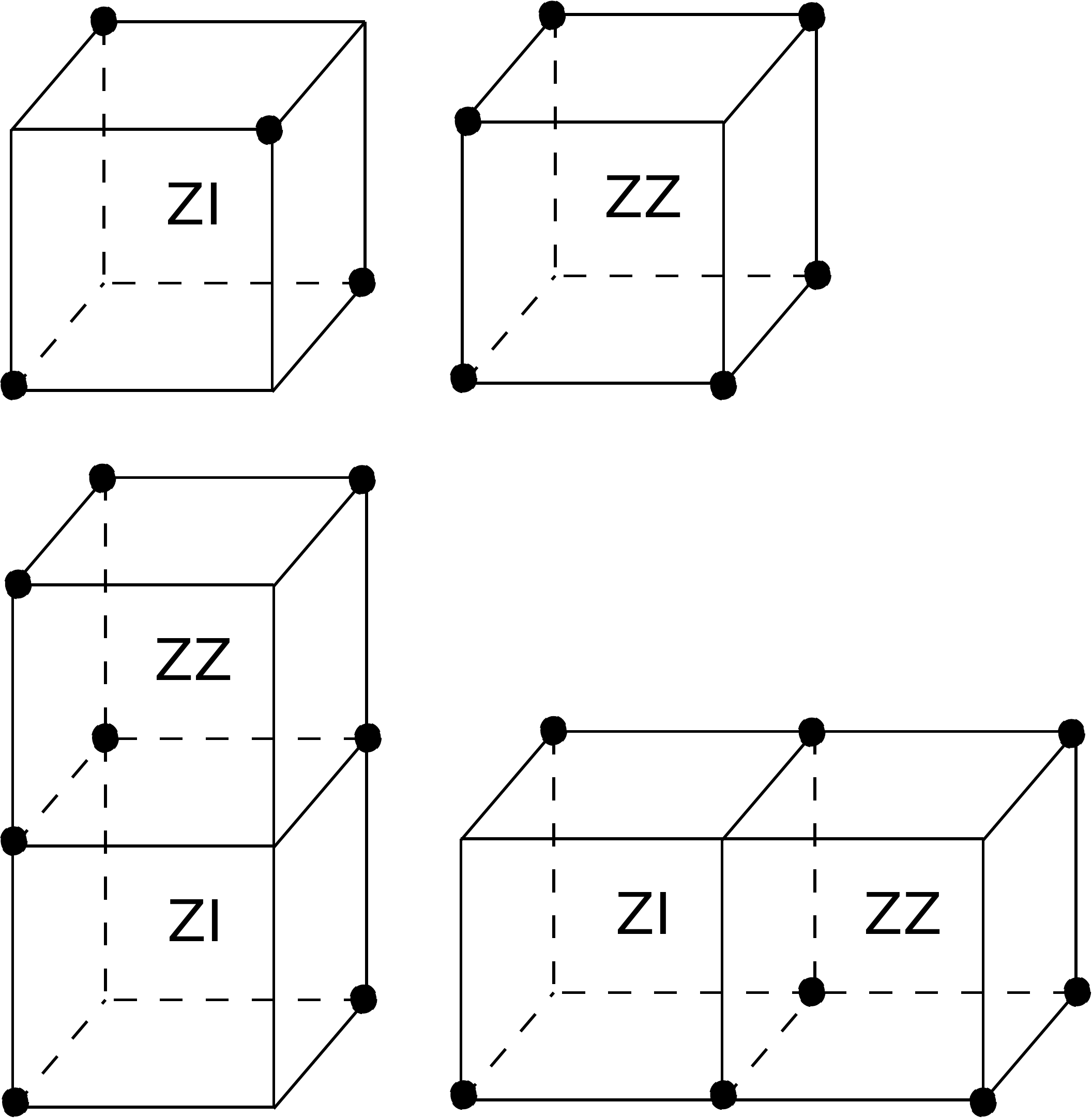}
\caption{Elementary syndromes created by $Z$ errors. The vertices which are on the dual lattice, represent the defects created by the error at the center. The elementary syndrome by $ZI$ is used to push the defects to the bottom and to the left. The syndromes by errors of weight three is used to push the defects to the bottom-left corner.}
\label{fig:elementary-syndrome}
\end{figure}

By the duality between $Z$- and $X$-type operator,
we assume $E$ is consisted entirely of $Z$ errors.
Since in the cubic code there is only one $X$-type of stabilizer generators,
the positions of defects completely determines the $X$-type syndrome.
We shall use double-letter indices to represent single-site Pauli operators;
for example, $IZ \equiv I \otimes Z$, $ZZ \equiv Z \otimes Z$, $II \equiv I \otimes I$, etc.
Let the qubits have integer coordinates, and the defects have half-integer coordinates.

Elementary syndromes are depicted in Fig.~\ref{fig:elementary-syndrome}.
We fix a box $B$ that encloses all the defects in the neutral cluster.
We will sweep the defects to bottom-left-back corner.
Since each defect is a $Z_2$ charge, they will disappear in the end.
The algorithm begins with the top-right foremost vertex of $B$ on the dual lattice.
If there is a defect at $(x+\half,y+\half,z+\half)$,
we apply $ZI$ at $(x,y,z)$ to eliminate it.
This might create another defects as there are four defects in the elementary syndrome.
Important is that the potentially new defects
are all contained in the box $B$ we started with.
Continuing with $ZI$ we push all the defects in the foremost plane of $B$
to the left vertical line and the bottom horizontal line.
During this process, we record where $ZI$ has been applied.

For the defects on the vertical line at the left or on the horizontal line at the bottom,
we use the operator of weight 3 to further move the defects to the bottom-left corner.
See Fig.~\ref{fig:elementary-syndrome}.
This will in general create more defects behind, all of which are still contained in $B$.
Thus, we have moved all the defects on the foremost plane to the bottom-left corner
except for the three sites $t,u,v$:
\begin{equation}
\xymatrix@!0{ t \ar@{-}[d] & o \\ u \ar@{-}[r] & v }
\label{eq-diagram:last-three-defects}
\end{equation}
That is, if $E'$ is the recorded operator during the sweeping process,
the syndrome $S(EE') \subseteq B$ has potential defects only at $t,u,v$ on the foremost plane.

Let $o$ be at $(x_o+\half,y_o+\half,z_o+\half)$.
By considering the multiplication by suitable stabilizer generators $Q^Z$,
we can assume that $EE'$ is the identity on the plane $x=x_o$, except $(x_o,y_o,z_o)$.
Since there is no defect at $o$,
the operator at $(x_o,y_o,z_o)$ has to commute with $XX$;
it is either $II$ or $ZZ$.
Applying $ZZ$ if necessary, the operator at $(x_o,y_o,z_o)$ will become $II$,
and the defects at $t,u,v$ will disappear.
In this way, we have successfully pushed all the defects to the next-to-foremost plane.
We emphasize that the box $B$ still envelops all the defects,
and further $B$ can be shrunk in one direction.

Due to the threefold symmetry of the cubic code,
one can carry out this broom algorithm along any of three directions.
We will have at last a box $B$ of volume 1 that encloses all defects.
The defects in the cluster must be from one of the three elementary syndrome cubes
created either by $ZI$, $IZ$, or $ZZ$, which are easily eliminated.
It is clear that in time $O(V)$ the error operator has been computed up to stabilizer,
where $V$ is the initial volume of the minimal enclosing box of the cluster.

Note that a similar sweeping algorithm
is applicable to the toric code family for any dimensions.

\section{Correctability of errors with a logarithmic energy barrier}
\label{sec:Correctability}

Let $P$ be an unknown Pauli error. Suppose we are promised that $P$ has a sufficiently small
energy barrier, namely, $\Delta(P)\le c\log{L}$, for some constant $c$ that will be chosen later.
In this section we prove that any such error $P$
will be corrected by the RG decoder for any  topological stabilizer code
satisfying the no-strings rule.
For completeness, we define the no-strings rule formally.

\subsection{No-strings rule}
\label{subs:NoStrings}

Let $P$ be any Pauli operator supported on a cube with linear size at most $\ltqo$,
and let $S=S(P)$ be its syndrome.
Suppose $S$ can be partitioned into a pair of disjoint
clusters, $S=C_1\cup C_2$, such that $d(C_a)\le r$ and $d(C_1,C_2)\ge \alpha r$
for some $r \ge 1$ and $\alpha \ge 1$.
We call such $P$ a {\em string segment} with {\em aspect ratio} $\alpha$.
Intuitively, a string segment with a large aspect ratio, $\alpha\gg 1$,
is an operator capable of creating a pair of well-separated clusters of defects from the vacuum.
We say $P$ is a {\em trivial} string segment if the clusters of defects $C_1,C_2$ are
neutral (note that the full syndrome $S=C_1\cup C_2$ is always neutral due to our assumptions on $P$).
Throughout this section we only consider topological stabilizer codes following Definition~\ref{dfn:TQO}.
Let $\alpha\ge 1$ be any constant.
\begin{defn}
A family of topological stabilizer codes obeys the {\em no-strings rule} with a constant $\alpha$
if any logical string segment with the aspect ratio greater than $\alpha$ is trivial.
\end{defn}
The 3D Cubic Code obeys no-string rule with $\alpha=5$ under the current $\ell_\infty$-metric convention~\cite{Haah2011Local}.

\subsection{Errors with a logarithmic energy barrier}
\label{subs:GoodErrors}

In order to assess the capability of the RG decoder rigorously,
we will need some of the results from Ref.~\cite{BravyiHaah2011Energy}.
Assume throughout this section that a family of topological stabilizer codes $\{ {\cal C}_L\}_L$
with $\ltqo \ge L^\gamma$ obey the no-strings rule with some constant $\alpha$.
Define \[ \xi(p)= (10\alpha)^p \] for notational convenience.
\begin{defn}[\cite{BravyiHaah2011Energy}]
A syndrome $S$ is called {\em sparse at level $p$} if $S$ can be partitioned
into disjoint clusters, $S=C_1\cup \ldots \cup C_n$, such that
\[
d(C_a)\le \xi(p) \quad \mbox{and} \quad  d(C_a\cup C_b)>\xi(p+1)
\]
for all $a\ne b$. Otherwise, $S$ is called {\em dense
 at level $p$}.
\end{defn}
(The term `dense' should not be confused with the `dense subset' of a topological space,
with which any nonempty open set intersects;
the two notions are even conceptually different.
Here, the denseness is just the negation of sparseness.)

\begin{prop}[Lemma~1 of \cite{BravyiHaah2011Energy}]
\label{prop:dense}
If $S$ is dense at all levels $q=0,\ldots,p$ then $S$ contains
at least $p+2$ defects.
\end{prop}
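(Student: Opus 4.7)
The plan is to argue by induction on $p$, exploiting the fact that removing a defect that realizes the diameter of $S$ should reduce the number of ``dense levels'' by exactly one.

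\textbf{Base case ($p=0$).} Any syndrome with $|S|\le 1$ admits a trivial partition (empty or a single cluster of zero diameter) satisfying the definition of sparseness at every level. Hence density at level $0$ forces $|S|\ge 2$.

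\textbf{Inductive step.} Assume the statement for $p-1$ and let $S$ be dense at levels $0,1,\ldots,p$. Density at level $p$ rules out the one-cluster partition, so $d(S)>\xi(p)$; choose $u^*, v^*\in S$ with $d(u^*,v^*)=d(S)$, and set $S'=S\setminus\{v^*\}$. I claim $S'$ is dense at every level $q\in\{0,\ldots,p-1\}$; granting this claim, the inductive hypothesis applied to $S'$ yields $|S'|\ge p+1$ and hence $|S|\ge p+2$. To prove the claim, suppose for contradiction that $S'$ is sparse at some $q\le p-1$ via a partition $S'=C_1\cup\cdots\cup C_n$ with $d(C_a)\le\xi(q)$ and $d(C_a\cup C_b)>\xi(q+1)$ for $a\ne b$. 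I would extend this partition to a sparse partition of $S$ at the same level $q$, contradicting density of $S$ at level $q$. The easy case is when for each $a$ some defect of $C_a$ lies at distance $>\xi(q+1)$ from $v^*$: then $\{\{v^*\},C_1,\ldots,C_n\}$ is immediately sparse at level $q$. The cluster $C_{a^*}$ containing $u^*$ automatically satisfies this condition because $d(u^*,v^*)=d(S)>\xi(p)\ge\xi(q+1)$.

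\textbf{Main obstacle.} The delicate case is when some cluster $C_a$ (different from $C_{a^*}$) lies entirely within distance $\xi(q+1)$ of $v^*$, so that neither the singleton extension nor a naive merger with $v^*$ preserves the sparse conditions (the former violates the between-cluster gap, the latter threatens to push the within-cluster diameter past $\xi(q)$). I expect to resolve this by local surgery on the partition: absorb all ``too-close'' clusters together with $v^*$ into a single new cluster $M$, using the telescoping triangle inequality and the scale separation $\xi(q+1)=10\alpha\,\xi(q)$ to control $d(M)$, while relying on $u^*\in C_{a^*}$ to certify the between-cluster condition $d(C_{a^*}\cup M)>\xi(q+1)$. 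Making this regrouping quantitatively tight, and verifying that the remaining untouched clusters still separate from $M$, is the technical heart of the argument.
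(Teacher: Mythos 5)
The paper does not reproduce a proof of this proposition (it is quoted as Lemma~1 of~\cite{BravyiHaah2011Energy}), so there is no in-paper argument to compare against; I can only assess the proposal on its own terms. The base case is fine, but the inductive step rests on a claim that is simply false: it is not true that removing an endpoint $v^*$ of a diameter-realizing pair leaves $S'=S\setminus\{v^*\}$ dense at every level $q\le p-1$. Concretely, take $D=1$, $\alpha=1$ (so $\xi(q)=10^q$), and $S=\{0,20,25\}$. One checks directly that $S$ is dense at levels $0$ and $1$ (so $p=1$). The diameter is realized by the pair $(0,25)$; removing $v^*=25$ gives $S'=\{0,20\}$, which is sparse at level $0$ (singletons at distance $20>\xi(1)=10$). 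So the claim fails already here, and since you give no rule for preferring $u^*$ over $v^*$, the induction does not go through.

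Moreover, the ``local surgery'' you sketch for the hard case cannot work, because the scale separation goes in the wrong direction for your purposes. In the hard case a cluster $C_a$ lies entirely within distance $\xi(q+1)$ of $v^*$; absorbing $v^*$ and such clusters into a single cluster $M$ gives, in the best case, $d(M)\le\xi(q+1)+\xi(q)\approx\xi(q+1)=10\alpha\,\xi(q)$, which badly violates the required level-$q$ diameter bound $d(M)\le\xi(q)$. In the example above one gets $M=\{20,25\}$ with $d(M)=5>\xi(0)=1$, so the surgery already fails at the first nontrivial instance. The difficulty you flag is therefore not a technical gap to be closed; it is a sign that the top-down ``remove a defect'' induction is the wrong shape of argument. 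The standard route is the contrapositive, built bottom-up: assume $|S|\le p+1$, start from the singleton partition (all diameters $0\le\xi(0)$), and at each level $q$ either the current partition certifies $q$-sparseness or one merges a violating pair $C_a,C_b$ with $d(C_a\cup C_b)\le\xi(q+1)$, producing a partition whose clusters all have diameter $\le\xi(q+1)$ and one fewer cluster. After at most $|S|-1$ merges one is left with a single cluster, which trivially certifies sparseness, so $S$ is sparse at some level $q\le|S|-1\le p$. This is exactly the mechanism behind Lemma~\ref{lemma:wellseparated-decomposition} in the present paper, and it avoids the obstruction entirely.
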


Let $\Gamma=(P_0,P_1,\ldots,P_t)$ be an error path implementing $P$
with the energy cost $m(\Gamma)=m$, see Section~\ref{subs:EnergyBarrier}.
Here $P_0=I$, $P_t=P$, while $E_j\equiv P_j P_{j-1}$ are single-qubit Pauli operators
for all $j$.
A sequence of syndromes $S(j)\equiv S(P_j)$, $j=0,\ldots,t$ is called a {\em level-$0$ syndrome history}.
Note that $S(0)=0$ and $S(t)=S$.
For any integer $p\ge 1$ define inductively a {\em level-$p$ syndrome history}
as a subsequence of $\{S(j)\}_{j=0,\ldots,t}$ including only $S(0)$, $S(t)$ and all syndromes $S(j)$,
$0<j<t$  which are dense at all levels $q=0,\ldots,p-1$.
Proposition~\ref{prop:dense} implies that
there is a level $p_{max}\le m-1$ such that
the level-$p_{max}$ syndrome history contains only the initial empty syndrome $S(0)=0$
and the final syndrome $S(t)=S$. The key technical result is the following.
\begin{prop}[Lemma~2 of \cite{BravyiHaah2011Energy}]
\label{prop:localization}
Let $S'$ and $S''$ be any consecutive pair of syndromes in the level-$p$ syndrome history.
Let $E$ be the product of all errors $E_j$ that occurred in the error path $\Gamma$ between $S'$ and $S''$.
If $4m (2+\xi(p)) < \ltqo$ then there exists a stabilizer $G\in {\cal G}$ such that
$E\cdot G$ is supported on the
$\xi(p)$-neighborhood of $S'\cup S''$.
\end{prop}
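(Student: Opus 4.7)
The plan is induction on the level $p$, mirroring the hierarchical construction of the syndrome histories.

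\textbf{Base case $p=0$.} The level-$0$ history contains every $S(j)$ (the density condition is vacuous), so consecutive entries $S',S''$ differ by one single-qubit error $E=E_j$. Because each stabilizer generator acts on $O(1)$ qubits, every defect of $S(E)\subseteq S'\cup S''$ sits within distance $1=\xi(0)$ of $E_j$'s qubit, so one takes $G=I$.

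\textbf{Inductive step.} Assume the result at level $p$ and let $S',S''$ be consecutive in the level-$(p+1)$ history. Insert the level-$p$ entries that fall strictly between them to form a chain $S'=T_0,T_1,\ldots,T_k=S''$; each intermediate $T_i$ ($0<i<k$) belongs to the level-$p$ but not the level-$(p+1)$ history, so it is \emph{sparse at level $p$} and decomposes as $T_i=\bigsqcup_j C^i_j$ with $d(C^i_j)\le\xi(p)$ and pairwise separation exceeding $\xi(p+1)=10\alpha\,\xi(p)$. The inductive hypothesis applied to each segment $T_{i-1}\to T_i$ supplies stabilizers $G_i$ such that $F^{(i)}\equiv E^{(i)}G_i$ is supported on the $\xi(p)$-neighborhood of $T_{i-1}\cup T_i$, and $F\equiv\prod_i F^{(i)}$ differs from $E$ by a stabilizer and carries syndrome $S'\oplus S''$.

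I would then organize the union $U\equiv\bigcup_i\xi(p)\text{-nbd}(T_{i-1}\cup T_i)$ into \emph{tubes} --- the connected components under spatial overlap of these neighborhoods. Sparseness within each $T_i$ keeps distinct clusters of the same $T_i$ in separate tubes, and $F$ factorizes (up to reordering of commuting factors) as $\prod_\tau F_\tau$ with $F_\tau$ supported on tube $\tau$. A counting argument along the lines of Proposition~\ref{prop:dense}, using the energy cost $m$, shows that each tube sits inside a cube of linear side at most $4m(2+\xi(p))<\ltqo$ by hypothesis, and that the syndrome $S(F_\tau)$ equals the portion of $S'\oplus S''$ adjacent to $\tau$: transient defects from intermediate $T_i$'s inside the tube cancel between consecutive $F^{(i)}$-segments. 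A tube missing $S'\cup S''$ has trivial syndrome, and Definition~\ref{dfn:TQO}(1) forces $F_\tau\in{\cal G}$, to be absorbed. For a tube touching $S'\cup S''$, split $S(F_\tau)$ into its $\xi(p+1)$-connected components; the no-strings rule applied to any pair of components separated by more than $\xi(p+1)$ (aspect ratio $>\alpha$) decomposes $F_\tau$ into commuting local factors each having syndrome confined to a single component, after which Definition~\ref{dfn:TQO}(2) squeezes every such local factor into the $1$-neighborhood of the minimum enclosing box of its syndrome, which lies inside the $\xi(p+1)$-neighborhood of $S'\cup S''$.

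\textbf{Main obstacle.} The delicate part is the interaction between tube geometry and the no-strings rule: one must verify that each tube really fits inside a cube of size $<\ltqo$, that the factorization of $F$ across tubes respects Pauli commutation, and that splitting $S(F_\tau)$ via no-strings yields a genuine decomposition of $F_\tau$ itself rather than just of its syndrome. The scaling $\xi(p)=(10\alpha)^p$ is calibrated so that the ratio $\xi(p+1)/\xi(p)=10\alpha$ supplies both the $>2\xi(p)$ buffer needed for tube disjointness and the aspect-ratio margin exceeding $\alpha$ required to invoke the no-strings rule with slack at every level.
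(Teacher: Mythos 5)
The paper does not prove this proposition; it is imported verbatim as Lemma~2 of Ref.~\cite{BravyiHaah2011Energy} and used here as a black box, so there is no in-paper argument to compare your attempt against.

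Taken on its own, your sketch has the right high-level skeleton --- induction on the level $p$, a chain of intermediate sparse level-$p$ syndromes between consecutive level-$(p+1)$ entries, a tube decomposition of the accumulated error, and a closing appeal to the no-strings rule and to Definition~\ref{dfn:TQO} --- and this is broadly the strategy of the cited reference. But the steps you yourself flag as ``the main obstacle'' are exactly where the content lives, and at least one of them does not close as written. When you split $S(F_\tau)$ into its $\xi(p+1)$-connected components and invoke the no-strings rule on a pair separated by more than $\xi(p+1)$, the relevant aspect ratio is separation divided by cluster diameter; a $\xi(p+1)$-connected component inside a tube of the linear size you posit, $4m(2+\xi(p))$, can itself have diameter of order $m\,\xi(p)\gg\xi(p+1)$ once $m$ is large, giving an aspect ratio of order $10\alpha/m<\alpha$, so the no-strings rule is not applicable to that pair. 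Likewise, the assertion that transient defects from the intermediate $T_i$'s ``cancel between consecutive $F^{(i)}$-segments'' so that $S(F_\tau)$ coincides with the tube's share of $S'\oplus S''$ is stated but not derived, and the tube-diameter bound $4m(2+\xi(p))$ is announced as following from ``a counting argument along the lines of Proposition~\ref{prop:dense}'' without that argument being supplied. These are not peripheral details but the substance of the lemma, so what you have is a plausible outline rather than a proof; for the actual argument one must go to Ref.~\cite{BravyiHaah2011Energy}.
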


This is useful to characterize a low energy barrier error.
\begin{lem}
\label{lemma:neutral-components}
Let $P$ be any Pauli error,
$S=S(P)$ be its syndrome,
and $m=\Delta(P)$ be its energy barrier.
Suppose
\begin{equation}
\label{msmall}
8m(10\alpha)^m < \ltqo.
\end{equation}
Then, there exists a stabilizer $G \in {\cal G}$ such that $P \cdot G$
has support on the $\xi(m)$-neighborhood of $S$.
Any $R$-connected component of $S$ is neutral for $2\xi(m) < R \le 4 \xi(m)$.
\end{lem}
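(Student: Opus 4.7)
The plan is to read off the first claim directly from Proposition~\ref{prop:localization}, and then to use a simple counting consequence of the energy barrier to reduce the second claim to neutrality at the topological order scale $\ltqo$. The case $m=0$ forces $P=I$ and $S=\emptyset$ and is trivial, so I assume $m\ge 1$ throughout.

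For the first claim I apply Proposition~\ref{prop:localization} at the highest level $p=p_{max}$. By construction the level-$p_{max}$ syndrome history contains only $S(0)=\emptyset$ and $S(t)=S$, so its unique consecutive pair is $(\emptyset,S)$ and the associated product of errors $E$ equals $P$ itself. The hypothesis $4m(2+\xi(p_{max}))<\ltqo$ is implied by the stronger assumption $8m(10\alpha)^m<\ltqo$ of the lemma, since $p_{max}\le m-1$ and $\xi$ is monotone with $\xi(m)\ge 10\alpha\ge 2$. The proposition then produces a stabilizer $G$ such that $P\cdot G$ is supported in the $\xi(p_{max})$-neighborhood, hence in the $\xi(m)$-neighborhood, of $S$.

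For the second claim I fix $R$ with $2\xi(m)<R\le 4\xi(m)$ and an $R$-connected component $C$ of $S$, with complement $C'=S\setminus C$. Maximality of $C$ forces $d(C,C')>R>2\xi(m)$, so $N_{\xi(m)}(C)$ and $N_{\xi(m)}(C')$ are disjoint. Combined with the first claim this lets me split $P\cdot G=P_1P_2$ with $P_a$ supported in $N_{\xi(m)}(C_a)$ (setting $C_1=C$, $C_2=C'$); because $\xi(m)$ comfortably exceeds the range of any stabilizer generator, a short locality argument then yields $S(P_1)=C$ and $S(P_2)=C'$, so $P_1$ creates exactly $C$ from the vacuum.

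The crux of the second claim is a diameter bound on $C$. Any optimal path $\Gamma$ witnessing $\Delta(P)=m$ ends at $P_t=P$, so the maximum in $m(\Gamma)$ already controls the syndrome size, giving $|C|\le|S|\le m$. Combining this with $R$-connectedness yields $d(C)\le(|C|-1)\,R\le 4m\,\xi(m)$, and therefore $N_{\xi(m)}(C)$ fits inside a cube of linear size at most $(4m+2)\xi(m)<8m\,\xi(m)<\ltqo$ by hypothesis. Hence $P_1$ creates $C$ from the vacuum while living on a cube of size at most $\ltqo$, which is exactly the definition of neutrality. The one conceptual step that goes beyond a direct invocation of the machinery of~\cite{BravyiHaah2011Energy} is the observation $|S|\le m$, which ties the combinatorial size of the syndrome to the energy barrier and is what makes the whole argument go through.
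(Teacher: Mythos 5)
Your proposal is correct and follows essentially the same route as the paper: you invoke Proposition~\ref{prop:localization} at level $p_{max}$ to get the $\xi(m)$-neighborhood support, and then you use the elementary observation $|S|\le m$ (which the paper also uses, phrased as ``$C_a$ contains at most $m$ defects'') to bound the diameter of each $R$-connected component so that its $\xi(m)$-neighborhood fits inside a cube of size $<\ltqo$, from which neutrality follows by restriction of $P\cdot G$. The only cosmetic difference is that you spell out the splitting $P\cdot G = P_1 P_2$ and the locality argument giving $S(P_1)=C$ a bit more explicitly than the paper, which compresses this into the remark that the support of $P_a$ is separated from the rest of $P\cdot G$ by distance $R-2\xi(m)>0$.
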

\begin{proof}
Let us apply Proposition~\ref{prop:localization} to the level $p_{max}$,
the smallest integer such that the syndrome history at that level
has only the initial and final syndrome.
Since $p_{max} < m$, the condition $4m(2+\xi(p_{max})) < \ltqo$ is satisfied.
We have $S'=0$, $S''=S$, and $E=P$.
Hence, there exists a stabilizer $G\in {\cal G}$ such that
$P\cdot G$ is supported on the $\xi(p_{max})$-neighborhood of $S$.
It proves the first statement of the lemma.

Let $r = \xi(m) = (10\alpha)^m$.
Choose any $R$ such that $2r < R \le 4r$
and let $C_a$ be any $R$-connected component of $S$.
Since $C_a$ contains at most $m$ defects, the diameter of $C_a$ is at most $mR$.
Restricting $P \cdot G$ on the $r$-neighborhood of $C_a$,
we obtain a Pauli operator $P_a$
supported on a cube of linear size at most $mR+2r \le 4rm+2r <\ltqo$ by assumption.
Furthermore, the support of $P_a$ is separated from $(P_a)^{-1} (P \cdot G)$
by distance at least $R - 2r > 0$.
Hence, $P_a$ creates the cluster $C_a$ from the vacuum.
Therefore, $C_a$ is neutral.
\end{proof}

We wish to have a well-separated cluster decomposition.
\begin{lem}
\label{lemma:wellseparated-decomposition}
Let $S$ be any cluster of $m>0$ defects.
For any integer $\mu \ge 1$, there exists a nonnegative integer $ p < m$ 
and a decomposition $S = C_1 \cup \cdots \cup C_n$ such that
$d(C_a) \le 4^p \mu$ and $d(C_a,C_b) > 2 \cdot 4^{p} \mu$ for all $a \neq b$.
\end{lem}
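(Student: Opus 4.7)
\medskip
\noindent\textbf{Proof plan.}
The natural candidate is the partition $\pi_p$ of $S$ into $(2 \cdot 4^p \mu)$-connected components, which by construction automatically satisfies the separation condition $d(C_a, C_b) > 2 \cdot 4^p \mu$ whenever $a \neq b$. The task therefore reduces to finding some $p \in \{0, 1, \ldots, m-1\}$ for which every component of $\pi_p$ has diameter at most $4^p \mu$.

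I plan to analyze the sequence $\pi_0, \pi_1, \pi_2, \ldots$ via the minimum spanning tree (MST) of $S$: sort the $m-1$ MST edge weights as $w_1 \le \cdots \le w_{m-1}$, and observe that $\pi_p$ is exactly the partition obtained by retaining MST edges of weight at most $r_p = 2 \cdot 4^p \mu$. Since activating an MST edge is the only way the partition can change, the sequence contains at most $m$ distinct partitions, and the $p$-axis decomposes into at most $m$ ``regimes'' of constant partition. Within a regime starting at level $a$, the diameter of any cluster is bounded above by the sum of MST edge weights inside it; a geometric-sum argument exploiting that distinct MST edges activate at scales separated by factors of at least $4$ yields a bound of $8 \cdot 4^a \mu /3$. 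Consequently the smallest ``good'' level in that regime is at most $a+1$, provided this still falls inside the regime.

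It then remains to check that at least one regime produces a good level that is $\le m - 1$. The extreme case I would keep in mind is the ``bad chain'' $S = \{x_0, x_1, \ldots, x_{m-1}\}$ with $x_{j+1} - x_j = 2 \cdot 4^j \mu$, for which successive merges occur at every consecutive level $p = 0, 1, \ldots, m-2$ and the smallest good level is exactly $m-1$; this shows the bound in the lemma is tight. The main obstacle is turning the per-regime bounds above into a clean worst-case argument that yields the tight inequality $p \le m - 1$ rather than the slightly weaker $p \le m$. A cleaner alternative I would attempt in parallel is strong induction on $m$: remove a single defect $v^\star$ chosen as an endpoint of the largest MST edge (to minimally perturb the internal geometry of the surviving set $S' = S \setminus \{v^\star\}$), apply the inductive hypothesis to obtain a good level $p' < m-1$ for $S'$, and then verify that at the next scale $r_{p'+1}$ the reinstated defect $v^\star$ either joins the closest cluster of $S'$ without inflating its diameter beyond $4^{p'+1}\mu$ or sits as its own well-separated singleton.
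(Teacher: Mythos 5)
Your reduction to the $(2\cdot 4^p\mu)$-connected-component partition $\pi_p$ is correct (it is in fact forced by the separation condition), and the ``bad chain'' witness for tightness is valid, but the geometric-sum step is a genuine gap. The scales $r_p = 2\cdot 4^p\mu$ are geometrically spaced, but MST edge \emph{weights} are not: many MST edges may fall in the same interval $(r_{p-1}, r_p]$ and activate at the same level. The total weight of MST edges inside a cluster is then not bounded by a geometric series; it can be as large as $(m-1)\cdot r_p$. A concrete counterexample: $m$ defects on a line with uniform spacing $2\mu$. All $m-1$ MST edges activate at level $0$, the unique cluster has diameter $2(m-1)\mu$, which is not $\le\frac{8}{3}\mu$ for $m\ge 3$, and the actual smallest good level here is $\approx\log_4 m$, not $a+1 = 1$. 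Thus the asserted per-regime bound ``diameter $\le\frac{8}{3}\cdot 4^a\mu$'' and the conclusion ``smallest good level $\le a+1$'' are both false.

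The paper avoids the difficulty with a greedy one-merge-per-level argument: start from the trivial partition into $g\le m$ clusters of diameter $\le\mu$ (defects in the same elementary cube grouped together); while the current partition is not $p$-good, merge \emph{one} pair of clusters at distance $\le 2\cdot 4^p\mu$, producing a cluster of diameter at most $4^p\mu + 2\cdot 4^p\mu + 4^p\mu = 4^{p+1}\mu$, and increment $p$. Since every merge decreases the cluster count by exactly one, the process halts within $g-1\le m-1$ steps with $p < m$. The crucial feature --- which the MST analysis cannot supply --- is that the level is forced to increase by $1$ per merge, so the diameter invariant $d(C_a)\le 4^p\mu$ is preserved no matter how the edge weights are distributed. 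Your alternative induction-on-$m$ sketch runs into the same kind of difficulty when the reinstated defect $v^\star$ is simultaneously within reach of two or more clusters of $S'$, which would force an uncontrolled multi-way merge at a single level increment.
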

\begin{proof}
Let us say that a partition of $S$ into clusters is $\mu$-good if it satisfies the statement.
By grouping all defects occupying the same elementary cube into a cluster,
one obtains a partition $S=C_1\cup \ldots \cup C_g$. Obviously, $g\le m$, and $d(C_a) \le \mu$.
If this partition is not $0$-good,
then $g\ge 2$ and there is a pair, say, $C_1,C_2$ such that $d(C_1, C_2) \le 2\mu$.
Merging $C_1$ and $C_2$ into a single cluster $C'_2$,
one obtains a partition $S=C'_2 \cup C_3 \cup \ldots \cup C_g$ where $d(C'_2) \le 4\mu$.
If this partition is not $1$-good, then $g\ge 3$ and one can repeat the merging again.
After at most $g-1$ iterations, one arrives at a good partition.
\end{proof}
Note that the minimal enclosing boxes of distinct cluster do not overlap, since
\[
 d(b(C_a),b(C_b)) > 2 \cdot 4^p \mu - 4^p \mu - 4^p \mu = 0 .
\]
The following is the promised property of the RG decoder.
\begin{lem}
\label{lemma:RGdecoder-capability}
Let $P$ be any Pauli error with energy barrier $m=\Delta(P)$. Suppose
\[
 (80\alpha)^m < \ltqo .
\]
Then calling the RG decoder on the syndrome $S(P)$ returns a correcting operator
$P_{ec}$ such that $P P_{ec}$ is a stabilizer.
\end{lem}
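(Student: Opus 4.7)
The plan is to find an intermediate RG level $p^*$ at which the syndrome has been cleaned up to a well-separated neutral decomposition of $S=S(P)$ that the decoder can then dispatch in one pass. First I would invoke Lemma~\ref{lemma:neutral-components}, whose hypothesis $8m(10\alpha)^m<\ltqo$ is implied by $(80\alpha)^m<\ltqo$ since $8^m\ge 8m$ for $m\ge 1$. This produces a stabilizer $G$ and a localized representative $P':=P\cdot G$ supported in the $\xi(m)$-neighborhood of $S$. I then apply Lemma~\ref{lemma:wellseparated-decomposition} with $\mu=\xi(m)+1$ to obtain an integer $q<m$ and a decomposition $S=C_1\sqcup\cdots\sqcup C_n$ with $d(C_a)\le 4^q\mu$ and $d(C_a,C_b)>2\cdot 4^q\mu\ge 2\xi(m)+2$.

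Because $d(C_a,C_b)>2\xi(m)+1$, no stabilizer generator has its qubits split between the $\xi(m)$-neighborhoods of two different clusters. This lets me factorize $P'=\prod_a P'_a$ with each $P'_a$ supported on the $\xi(m)$-neighborhood of $C_a$ and with syndrome $S(P'_a)=C_a$, which in particular shows that each $C_a$ is neutral and is created inside a cube of linear size $d(C_a)+2\xi(m)\le\ltqo$ (the bound follows by a short calculation from $(80\alpha)^m<\ltqo$, using $\mu\le 2\xi(m)$ and $q<m$). I then choose an integer $p^*$ with $2^{p^*}\in[4^q\mu,\,2\cdot 4^q\mu-1]$; such a power of two exists because every interval $[a,2a-1]$ with $a\ge 1$ contains one, and the same hypothesis gives $2^{p^*}<\ltqo\le L/2$, so $p^*\le p_M$ is a legal RG level. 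By construction, at scale $2^{p^*}$ each $C_a$ is $2^{p^*}$-connected (since $2^{p^*}\ge d(C_a)$) while different clusters remain separated (since $2^{p^*}<d(C_a,C_b)$).

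The main obstacle is to control what the decoder has done by the time it reaches level $p^*$. Every earlier level uses a scale $2^p<2^{p^*}<d(C_a,C_b)$, so any cluster processed at such a level lies entirely inside a single $C_a$; its enclosing box has diameter at most $m\cdot 2^{p^*}\le\ltqo$, so the TestNeutral size check is not triggered, and each appended operator is supported in the $1$-neighborhood of $b(C_a)$. Hence the partial correction splits as $P_{ec}=\prod_a P_{ec,a}$ with $P_{ec,a}$ supported near $C_a$ and with syndrome $E_a:=S(P_{ec,a})\subseteq C_a$ neutral by construction. Writing $D_a:=C_a\setminus E_a$ for the residual piece of $C_a$ that remains in the current syndrome, the identity $D_a=C_a\oplus E_a$ together with neutrality of $C_a$ (witnessed by $P'_a$) and of $E_a$ (witnessed by $P_{ec,a}$), both lying in a common cube of size $\le\ltqo$, shows that $D_a$ is itself neutral. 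At level $p^*$ the $2^{p^*}$-connected components of the current syndrome are exactly the nonempty $D_a$'s, each of diameter $<\ltqo$, so TestNeutral succeeds on every one of them and the syndrome becomes empty; subsequent levels then do nothing and the decoder does not abort.

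To close, I would compare the final $P_{ec}$ with $P'$. After the level-$p^*$ pass, $P_{ec,a}$ has syndrome $E_a\oplus D_a=C_a$ and support inside the $\xi(m)$-neighborhood of $C_a$, the same cube of size $\le\ltqo$ in which $P'_a$ lives. Consequently $P_{ec,a}\cdot(P'_a)^\dag$ commutes with all stabilizers within a cube of size $\le\ltqo$, and condition (1) of Definition~\ref{dfn:TQO} forces it to be a stabilizer. Taking the product over $a$ gives $P_{ec}\equiv P'=P\cdot G$ modulo stabilizers, which is exactly the statement that $P\cdot P_{ec}$ is a stabilizer.
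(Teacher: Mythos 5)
Your proof is correct and follows essentially the same strategy as the paper's: both use Lemma~\ref{lemma:neutral-components} and Lemma~\ref{lemma:wellseparated-decomposition} to produce a well-separated decomposition of $S(P)$ into neutral clusters, both argue that the decoder's action at lower levels stays confined to individual clusters so that at some appropriate level the residual of each cluster is still neutral and gets annihilated, and both close by invoking Definition~\ref{dfn:TQO}(1) to identify the accumulated correcting operator with $P\cdot G$ modulo stabilizers. The paper picks $\mu=2^p$ with $2\xi(m)<2^p\le 4\xi(m)$ and derives the neutrality of each $C_a$ directly from the second clause of Lemma~\ref{lemma:neutral-components} (``any $R$-connected component of $S$ is neutral''), whereas you pick $\mu=\xi(m)+1$ and derive neutrality by factorizing the localized representative $P'=PG$ across the well-separated clusters; these are two equivalent uses of the same lemma, and your route is slightly more explicit about tracking the partial correcting operator $P_{ec}=\prod_a P_{ec,a}$ and showing the residual $D_a=C_a\oplus E_a$ is neutral as a symmetric difference of neutral clusters with witnesses in a common box of linear size $<\ltqo$. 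That bookkeeping makes the step that the paper compresses into the one-line ``the RG decoder from level-$0$ to $p$ will annihilate each cluster $C_a$'' more transparent, at the cost of a few extra size estimates which you sketch but do check out. The only cosmetic omission is the trivial $m=0$ case ($P$ must then be the identity), which is harmless.
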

Thus, the RG decoder corrects $P$ if $\Delta(P) < \frac{\gamma}{\log (80 \alpha)} \log L$.
\begin{proof}
Let $S=S(P)$ be the syndrome.
Let $p$ be the integer such that $2 \xi(m) < 2^p \le 4 \xi(m)$.
Setting $\mu=2^p$ in Lemma~\ref{lemma:wellseparated-decomposition},
$S$ is decomposed into $S=C_1 \cup \ldots \cup C_n$
such that $d(C_a) \le 2^{p'}$ and $ d(C_a,C_b) > 2^{p'+1}$ for all $a \ne b$,
where $p'$ is an integer such that $p \le p' < 2m+p$.
By Lemma~\ref{lemma:neutral-components},
each $C_a$ is neutral for being a disjoint union of neutral $2^p$-connected components.
The RG subroutines EC($s$) with $s=0,1,\ldots,p-1$,
can only annihilate some neutral $2^s$-connected components of $C_a$,
which does not alter the neutrality of $C_a$.
Therefore, the RG decoder from level-$0$ to $p$
will annihilate each cluster $C_a$, and hence $S$ at last.

We need to show that $P \cdot P_{ec}$ is a stabilizer,
where $P_{ec}$ is the returned correcting operator.
Let $B_a$ be the $(10\alpha)^m$-neighborhood of $b(C_a)$.
Our assumptions imply that $B_a$ has diameter smaller than $\ltqo$
and distinct $B_a$'s do not intersect.
By construction, the operators $P_{ec}$ and $P \cdot G$
have support in the union $B_1 \cup \cdots \cup B_n$.
Therefore, $P \cdot G \cdot P_{ec} = Q_1 \cdots Q_n$,
where $Q_a$ has support on $B_a$ and has trivial syndrome.
Topological order condition implies that $Q_a$ are stabilizers, so is the product.
\end{proof}

As was mentioned earlier, the full hierarchy of the RG decoder
is not necessary to correct the error with the low energy barrier.
A single level-$p$ error correction with $p$ proportional to $\log \ltqo$, will be sufficient.
We nevertheless include the hierarchy
since in practice it corrects errors with slightly higher
(although only by a constant factor) energy barrier
at a marginal slowdown of the decoder.
If one wishes to apply the decoder against random errors,
the hierarchy becomes necessary.
We discuss the character of the RG decoder against random errors in Appendix.

\subsection{Further specialization}

A closer analysis reveals a simplification of TestNeutral defined in Section~\ref{subs:rgdecoder} for the 3D Cubic Code.
We defined TestNeutral to return the identity operator if a cluster turns out to be charged.
The modified TestNeutral$'$ just applies the broom algorithm and returns recorded operator in any case.
It gives the {\em same} characteristic as stated in the Lemma~\ref{lemma:RGdecoder-capability}.
EC$'(p)$ using TestNeutral$'$ will transform a charged cluster $C_a$ to a {\em different} cluster $C'_a$,
but $C'_a$ is still contained in $b(C_a)$.
Due to Lemma~\ref{lemma:wellseparated-decomposition}, $b(C_a)$ do not overlap
at a high level $p$, and EC$'(p)$ will eliminate neutral clusters at last.
This specialized version of RG decoder is used in our numerical simulation of the next section.

\section{Numerical Simulation}
\label{sec:numerics}

We test our theoretical bound by numerical simulations. The interaction of the memory system with a thermal bath is simulated by Metropolis evolution. As we wish to observe low temperature behavior we adopt continuous time algorithm by Bortz, Kalos, and Lebowitz (BKL)~\cite{BortzKalosLebowitz1975}.
A pseudo-random number generation package \verb|RngStream| by L'Ecuyer~\cite{LEcuyerEtAl2002RngStream} was used.
As before, the coupling constant in the Hamiltonian is set to $J = 1/2$ so a single defect has energy 1.
Although the 3D Cubic Code is inherently quantum,
it is relevant to consider only $X$-type errors (bit flip) in the simulation,
thanks to the duality of the $X$- and $Z$-type stabilizer generators of the 3D Cubic Code.
The simulation thus is purely classical.
The errors are represented by a binary array of length $2L^3$,
and the corresponding syndrome by a binary array of length $L^3$.

The memory time is measured to be the first time when the memory becomes unreliable. There are two cases the memory is unreliable: either the broom algorithm fails to remove all the defects so we have to reinitialize the memory, or a nontrivial logical error is occurred. It is thus necessary in our simulation to keep track of the error operator during the time evolution.
In fact, most of the time, it was the broom algorithm's failure that made the memory unreliable.
Nontrivial logical errors occurred only for very small system sizes $L=5,7$.

It is too costly to decode the system every time it is updated. 
Alternatively, we have performed a trial decoding  every fixed time interval
\[
 T_{ec} = \frac{e^{4\beta}}{100}
\]
where $\beta$ is the inverse temperature.
Although the time evolution of the BKL algorithm is stochastic, a single BKL update typically advances time much smaller than $T_{ec}$. So it makes sense to decode the system every $T_{ec}$. The exponential factor appears naturally because BKL algorithm advances time exponentially faster as $\beta$ increases. It is to be emphasized that we do not alter the system by the
trial decodings (a copy of the actual syndrome has been created for each trial decoding). 

The system sizes $L^3$ for the simulation are chosen such that the code space dimension is exactly $2$, for which the complete list of logical operators is known. If the linear size $L$ is $\le 200$, this is the case when $L$ is not a multiple of 2, 15, or 63 \cite{Haah2011Local}. For these system sizes, to check whether a logical operator is nontrivial is to compute the commutation relation with the known nontrivial logical operators.

The measured memory time for a given $L$ and $\beta$ is observed to follow an exponential distribution; a memory system is corrupted with a certain probability given time interval.
Specifically, the probability that the measured memory time is $t$ is proportional to $e^{-t/\tau}$.
Thus the memory time should be presented as the characteristic time of the exponential distribution. We choose the estimator for the characteristic time to be the sample average $\bar T = \frac{1}{n}\sum_i^n T_i$. The deviation of the estimator will follow a normal distribution for large number $n$ of samples. We calculated the confidence interval to be the standard deviation of the samples divided by $\sqrt{n}$.
For each $L$, $400$ samples when $\beta \le 5.0$ and $100$ samples when $\beta > 5.0$ were simulated.
The result is summarized in Fig.~\ref{fig:TvsL},\ref{fig:TvsBeta}.

Figure~\ref{fig:TvsBeta} clearly supports $\log T_{mem} = \frac{1}{4}c \beta^2 + \cdots$. Figure~\ref{fig:TvsL} demonstrates the power law for small system size:
\[
 T_{mem} \propto L^{2.93 \beta -10.5}
\]
We wish to relate some details of the model with the numerical coefficients.
The Rigorous analysis of the previous section, gives a relatively small coefficient $c$ of the energy barrier for correctable errors by our RG decoder. However, we expect that the coefficient of $\beta$ in the exponent is the same as the constant $c$ that appear in the energy barrier
\[
 E = c \log_2 R
\]
to create an isolated defect separated from the other by a distance $R$.
This is based on an intuition that the output $P'$ of the decoder would have roughly the same support as the real error $P$ for the most of the time, provided that the error has energy barrier less than $\Delta=c \log_2 \ltqo$.
Thus, an error of energy barrier less than $\Delta$ would be corrected by the decoder.
Our empirical formula supports this intuition.
It suggests that $c = 2.93 \log 2 = 2.03 \sim 2$.

Indeed, we can illustrate explicitly an error path that separates a single defect from the rest by distance $2^p$ during which only $2p+4$ defects are needed. This is an improvement in the upper bound on the energy barrier for separating a charged set of defects, for which it was $c \le 4$ in \cite{BravyiHaah2011Energy}.

\begin{figure}
\centering
\includegraphics[width=.5\textwidth]{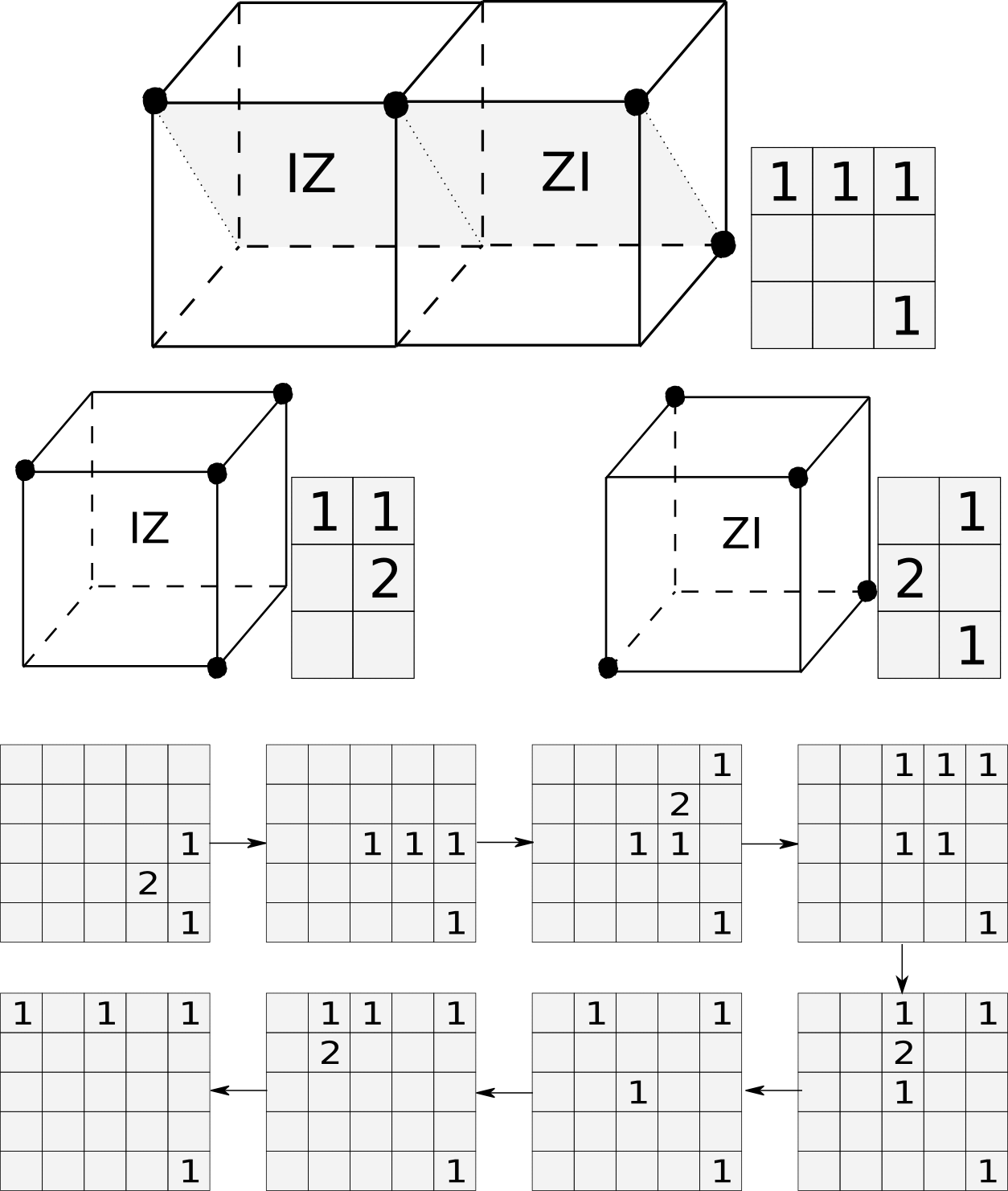}
\caption{Construction of a hook of level 2 from the vacuum. The grid diagram represents the position and the number of defects in the ($x=z$)-plane. For each transition, an operator of weight 1 is applied. The total number of defects never exceeds 6. From a level-$0$ hook (the second diagram in the sequence), a level-$1$ hook (the last in the sequence) is constructed using extra 2 defects.}
\label{fig:hook}
\end{figure}

Consider an error of weight 2 that creates 4 defects as shown in the top of Fig.~\ref{fig:hook}. We call it the {\em level-0 hook}. The bottom sequence depicts a process to create a configuration shown at the bottom-left, which we call {\em level-1 hook}. One sees that level-$1$ hook is similar with ratio 2 to level-$0$, and is obtained from level-0 with extra 2 defects.
One defines level-$p$ hooks hierarchically.
We claim that a level-$p$ hook can be constructed from the vacuum using $2p+4$ defects. The proof is by induction. The case $p=1$ is treated in the diagrams. Suppose we can construct level-$p$ hook using $2p+4$ defects. Consider the $2^{nd}$, $4^{th}$, $6^{th}$, and $8^{th}$ steps in Fig.~\ref{fig:hook}. They can be viewed as a minuscule version of level-$p$ steps that construct a level-$(p+1)$ hook from the level-$p$ hooks. It requires at most $2p+4+2$ defects to perform the level-$p$ step; this completes the induction.
It may not be obvious whether a high level hook corresponds to a nontrivial logical operator,
but such a large hook is bad enough to make our decoder fail.  

\vspace{1cm}
\centerline{\bf Acknowledgments}
We would like to thank David DiVincenzo, John Preskill, and Barbara Terhal for helpful discussions.
SB is supported in part by the DARPA QuEST program under contract number HR0011-09-C-0047, and
IARPA QCS program under contract number D11PC20167.
JH is supported in part by the Korea Foundation for Advanced Studies and by the Institute for Quantum Information and Matter, an NSF Physics Frontiers Center.
Computational resources for this work were provided by IBM Blue Gene Watson supercomputer center.

\appendix

\section{Benchmark of the decoder}
\label{sec:benchmark}

\begin{figure}
\centering
\begin{minipage}{.45\textwidth}
\includegraphics[width=\textwidth]{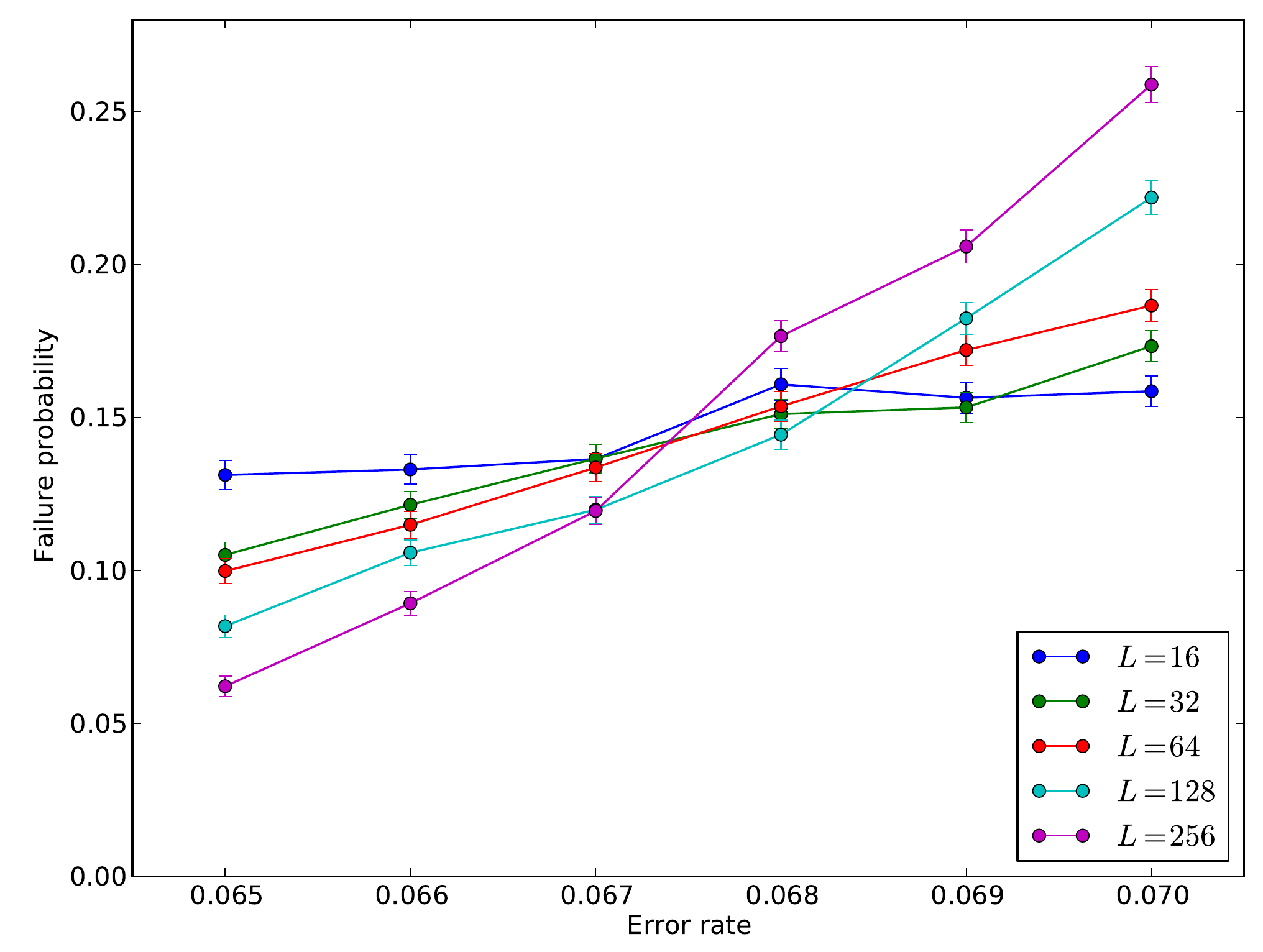}
\end{minipage}
\begin{minipage}{.45\textwidth}
\includegraphics[width=\textwidth]{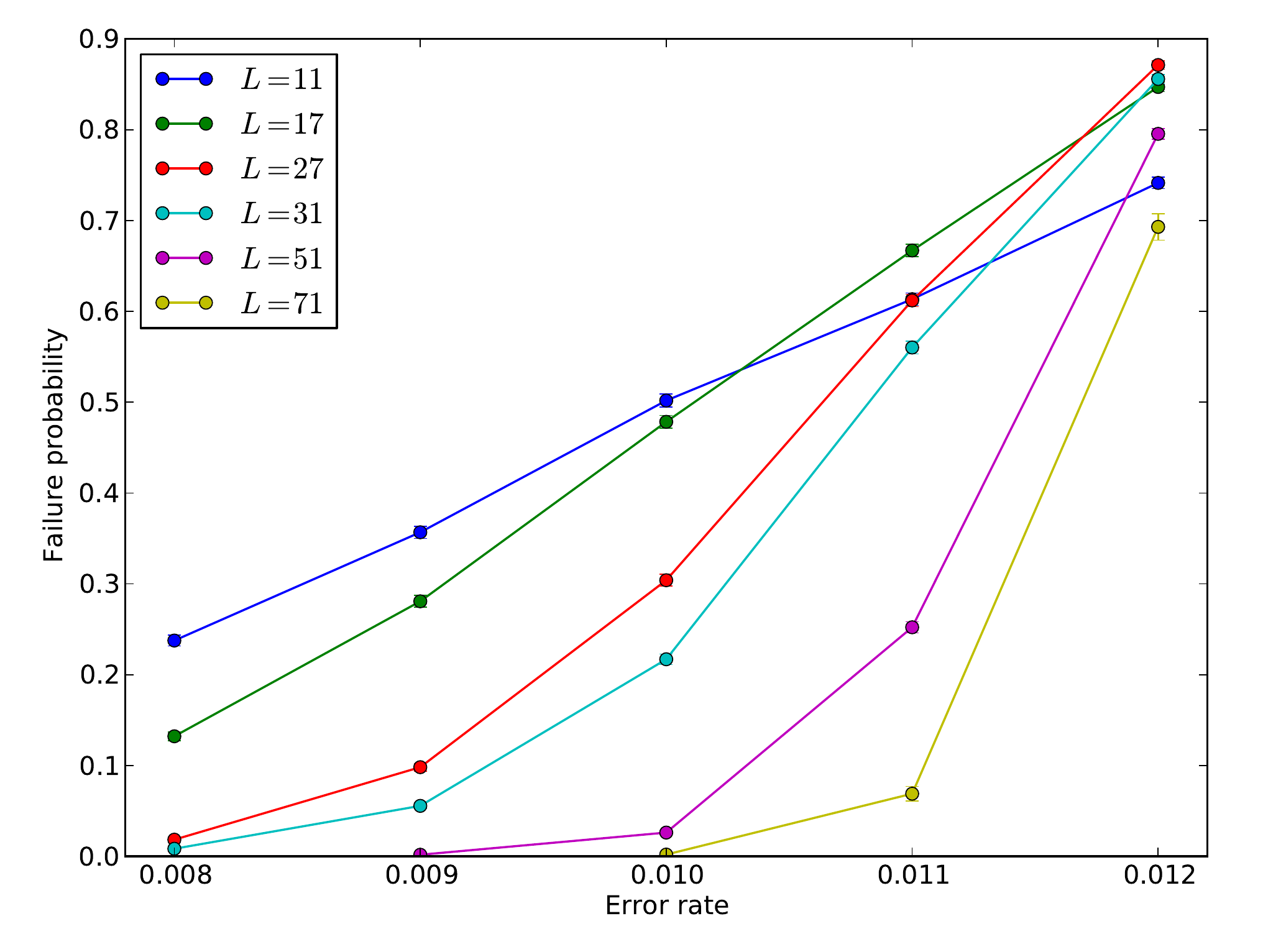}
  \end{minipage}
\caption{
The thresholds of 2D toric code (left) and 3D cubic code (right)
under independent random bit-flip errors using our RG decoder,
are measured to be $p_c(\text{2D toric})=6.7(1)\%$ and $p_c(\text{3D cubic}) \gtrsim 1.1 \%$.
}
\label{fig:threshold}
\end{figure}

Given a decoder, a family of quantum codes indexed by code length (system size)
is said to have an \emph{error threshold} $p_c$
if the probability for decoder to fail approaches zero in the limit of large code length
provided the random error rate $p$ is less than $p_c$.
We tested our decoder with respect to random uncorrelated bit-flip errors on the well-studied 2D toric code.
The error threshold is measured to be $6.7(1)\%$. See Fig.~\ref{fig:threshold}.
It is roughly two-thirds of the best known value $10.3\%$
based on the perfect matching algorithm~\cite{DennisKitaevLandahlEtAl2002Topological, Harrington2004thesis},
or $9\%$ based on a renormalization group decoder of similar nature to ours~\cite{Duclos-CianciPoulin2009Fast}.
This is remarkable for our decoder's simplicity and applicability.
The 3D cubic code has threshold $\gtrsim 1.1\%$ under independent bit-flip errors.

\section{Threshold theorem for topological stabilizer codes}
\label{sec:threshold}

In this section we prove that any topological stabilizer code can tolerate stochastic local errors with
a small constant rate assuming that the error correction is performed using the RG decoder.
We assume without lose of generality that each stabilizer generator is supported on a unit cube. Each site of the lattice may contain finitely many qubits. A generator at a cube $c$ may act only on qubits of $c$.
We shall assume that errors at different sites are independent and identically distributed.
More precisely, let $E(P)$ be the set of sites at which a Pauli error $P$ acts non-trivially. We shall assume that
\begin{equation}
 \mathrm{Pr}[E(P)=E] = (1- \epsilon)^{V-|E|} \epsilon^{|E|}
\label{eq:random-error-distribution}
\end{equation}
where $0 \le \epsilon \le 1$ is the error rate and $V=L^D$ is the total number of sites (the volume of the lattice).
For example, the depolarizing noise in which every qubit experiences $X,Y,Z$ errors
with the probability $p/3$ each, satisfies
Eq.~(\ref{eq:random-error-distribution}) with the error rate $\epsilon=1-(1-p)^q$, where $q$ is the number of qubits per site.
\begin{theorem}
Suppose a family of stabilizer codes has topological order, see Definition~\ref{dfn:TQO}.
Then, there exists a constant threshold $\epsilon_0>0$
such that for any $\epsilon<\epsilon_0$ the RG decoder corrects random independent errors with rate $\epsilon$
with the failure probability at most $e^{-\Omega(L^\eta)}$ for some constant $\eta>0$.
\label{thm:universal-threshold}
\end{theorem}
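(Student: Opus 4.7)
The plan is to adapt the renormalization-group argument of Harrington~\cite{Harrington2004thesis} to general topological stabilizer codes. Since the RG decoder processes the syndrome hierarchically at scales $2^p$ for $p=0,1,\ldots,p_M$, the natural strategy is to show that with probability at least $1-e^{-\Omega(L^\eta)}$ the random error configuration is ``sparse at every scale,'' and that every such configuration is corrected by the decoder without logical failure.

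First I would introduce a hierarchy of bad boxes. Fix a geometric constant $Q$ (slightly larger than the no-strings constant $\alpha$), and consider the standard grid of axis-aligned cubes of side $2^p$ at each level $p$. Declare a level-$0$ cube \emph{bad} iff it contains at least one site hit by the error $E$. Inductively declare a level-$p$ cube \emph{bad} iff its $Q$-neighborhood contains two disjoint level-$(p-1)$ bad cubes. Because the error is i.i.d.\ across sites by Eq.~\eqref{eq:random-error-distribution}, badness at level $p$ is determined by two non-overlapping sub-regions at level $p-1$, and a union bound over the $O(1)$ admissible pairs of children per parent gives
\[
\mathrm{Pr}[\text{level-}p\text{ bad}]\le K\,\mathrm{Pr}[\text{level-}(p-1)\text{ bad}]^{2}
\]
for an absolute constant $K$ depending only on $Q$ and the spatial dimension. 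Iterating from the base case $\mathrm{Pr}[\text{level-}0\text{ bad}]\le \epsilon'=O(\epsilon)$ yields $\mathrm{Pr}[\text{level-}p\text{ bad}]\le K^{-1}(K\epsilon')^{2^p}$, which is doubly exponentially small in $p$ whenever $\epsilon<\epsilon_0\equiv 1/K$.

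Next, I would prove the geometric lemma that ties combinatorial badness to decoder success: \emph{if no level-$p_M$ bad cube exists then the RG decoder returns $P_{ec}$ with $P\cdot P_{ec}\in \mathcal G$}. The argument is by induction on $p$. If every level-$p$ cube has at most one level-$(p-1)$ bad child in its $Q$-neighborhood, then the portion of $E$ supported in any ``good'' level-$p$ cube has diameter at most $c\cdot 2^{p}$ (by telescoping the children's diameters), hence its syndrome is a $2^{p}$-connected component of diameter $O(2^p)\ll \ltqo$. By the topological order condition of Definition~\ref{dfn:TQO}, such clusters are neutral and locally correctable, and by the no-strings rule the local annihilation performed by \textbf{TestNeutral} agrees with the true error modulo a stabilizer, exactly as in the proof of Lemma~\ref{lemma:RGdecoder-capability}. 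The separation $>Q\cdot 2^p$ between distinct level-$p$ bad clusters ensures that their minimal enclosing boxes (thickened by the topological-order fringe) are disjoint, so \textbf{EC}($p$) handles them independently. Choosing $p_M$ so that $(2Q)^{p_M}\le L^\gamma$ keeps us below the $\ltqo$ scale, so the decoder never aborts.

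Finally, a union bound over the $O(L^D)$ possible positions of a level-$p_M$ bad cube gives
\[
\mathrm{Pr}[\text{RG decoder fails}]\le O(L^D)\cdot K^{-1}(K\epsilon)^{2^{p_M}} = e^{-\Omega(L^\eta)},
\]
with $\eta = (\log 2)/\log(2Q)\cdot \gamma > 0$. The main obstacle is the geometric induction above: one must verify that the partial correction computed at every RG level truly differs from the actual restricted error by a stabilizer rather than by a nontrivial logical operator. This is where the two ingredients from Section~\ref{sec:Correctability}---the topological order condition and the no-strings rule---are essential, since they guarantee that a neutral cluster of defects in a region of diameter below $\ltqo$ is annihilated by a locally unique Pauli operator modulo $\mathcal G$. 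Everything else is routine probability theory and bookkeeping of constants.
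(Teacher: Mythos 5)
Your outline captures the right high-level strategy — a hierarchical combinatorial notion of ``bad'' regions, a geometric lemma showing that the RG decoder succeeds whenever no high-level bad region exists, and a doubly-exponential probability recursion closed by a union bound. This is close in spirit to the paper's proof, which also follows Harrington and Gács, but there are two substantive gaps.

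First, your probability recursion leans on an independence claim that does not actually hold beyond the first level. You declare a level-$p$ cube bad when its $Q$-neighborhood contains two disjoint level-$(p-1)$ bad cubes, and then assert that the two badness events are determined by non-overlapping regions so the product bound follows. At $p=1$ this is fine because level-$0$ badness depends only on sites inside the cube. But a level-$(p-1)$ cube's badness depends on sites in its $Q$-neighborhood (recursively), so two disjoint level-$(p-1)$ cubes can have overlapping spheres of influence, and the events are neither independent nor obviously sub-multiplicative. The paper avoids this trap entirely: it defines \emph{chunks} as subsets of the error set $E$ itself (a level-$n$ chunk is a disjoint union of two level-$(n-1)$ chunks of small diameter, not tied to any grid), and then bounds the probability of two disjoint level-$(n-1)$ chunks occurring inside a fixed box using the van den Berg--Kesten inequality for increasing events. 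BK is exactly the tool that replaces the bogus independence claim with a correct inequality for disjoint occurrence. Your proof needs either BK or a substantially more careful redefinition of badness (e.g., witnesses restricted to non-overlapping influence regions).

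Second, you invoke the no-strings rule inside the geometric induction, writing that neutrality and local correctability follow ``by the topological order condition\ldots and by the no-strings rule.'' The theorem, however, is stated for arbitrary topological stabilizer codes (Definition~\ref{dfn:TQO}) with no assumption about string operators, and the paper's proof of Lemma~\ref{lem:correctability-criterion} never uses the no-strings rule. Neutrality of the residual clusters comes for free because each cluster is created by the restriction of the actual error (times the accumulated correction) to a box of linear size $<\ltqo$; condition (2) of Definition~\ref{dfn:TQO} then supplies a local annihilating operator, and condition (1) ensures its uniqueness modulo stabilizers. Importing the no-strings rule here both weakens the theorem (making it applicable to a much smaller class of codes) and obscures which hypothesis actually does the work. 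The no-strings rule is only needed for the \emph{thermal} (energy-barrier) analysis in Section~\ref{sec:Correctability}, not for the stochastic-error threshold.

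A smaller point: the paper's Proposition~\ref{prop:structure-Fn} gives precise diameter and separation bounds for the chunk decomposition, which is what lets the induction in Lemma~\ref{lem:correctability-criterion} go through cleanly (each level-$p$ correction is confined to disjoint boxes $B_{j,\alpha}$ and matches the true error modulo stabilizers once $2^p$ dominates the cluster diameter). Your ``telescoping the children's diameters'' remark gestures at this, but without a floating (non-grid) decomposition you would still have to handle errors straddling grid boundaries, which is nontrivial. The chunk decomposition sidesteps this entirely.
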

In the rest of this section we prove the theorem.
 Our proof borrows some techniques from \cite{Gray2001Guide,Gacs1986Reliable,Harrington2004thesis}, specifically Section~5.1 of \cite{Gray2001Guide}.  For reader's convenience, we briefly  summarize the RG decoding algorithm below (see
Section~\ref{sec:rgdecoder} for details).

Recall that we use $\ell_\infty$-metric. A cube of linear size $r$ thus has diameter $r$.
Let $S=S(P)$ be the syndrome of a Pauli error $P$ considered as a set of defects (violated stabilizers).
A subset $M\subseteq S$ is called $r$-connected iff $M$ cannot be partitioned
into a pair of disjoint proper subsets separated by distance more than $r$. A maximum $r$-connected
subset of $S$ is called an $r$-connected component of $S$. A cluster of defects $M\subseteq S$
is called  \emph{neutral} if it can be created from the vacuum by a Pauli operator $P$ supported in a cube of linear size $\ltqo$.
The smallest rectangular box enclosing a cluster $M$ will be denoted $b(M)$.
Any neutral cluster $M$ can be created by a Pauli operator supported on the $1$-neighborhood of $b(M)$,
see Definition~\ref{dfn:TQO}.

The \emph{level-$p$ error correction} {\em EC($p$)} on a syndrome $S$ is the following subroutine.
(i) find all neutral $2^p$-connected components $M$ of $S$,
(ii) for each $M$ found at step 1, calculate and apply a Pauli operator $P$ supported on the 1-neighborhood of $b(M)$ that annihilates $M$, and update the syndrome accordingly.
Calling the full RG decoder on a syndrome $S$ involves the following steps:
(i) run EC(0), EC(1), ..., EC($\lfloor \log_2 \ltqo \rfloor$),
(ii) if the resulting syndrome $S$ is empty, return the accumulated Pauli operator applied by the subroutines EC($p$). Otherwise, declare a failure.

Below we shall use the term `error'
both for the error operator $P$ and for the subset of sites $E$ acted on by $P$, whenever the meaning is clear from the context.
Let us choose an integer $Q\gg 1$ and define a class of errors which are properly corrected by the RG decoder,
see Lemma~\ref{lem:correctability-criterion} below.
We will see later that this class of errors actually includes all errors which are likely to appear for small enough error rate.
\begin{defn}
Let $E$ be a fixed error.
A site $u\in E$ is called a \emph{level-$0$ chunk}. A non-empty subset of $E$ is called a \emph{level-$n$ chunk} ($n\ge 1$) if it is a disjoint union of two level-$(n-1)$ chunks and its diameter is at most $Q^n /2$.
\end{defn}
The term `chunk' is chosen in order to avoid confusion with `cluster', which is used for a set of defects.
Note that a level-$n$ chunk contains exactly  $2^n$ sites.
Given an error $E$, let $E_n$ be the union of all level-$n$ chunks of $E$.
If $u \in E_{n+1}$, then by definition $u$ is an element of a level-$(n+1)$ chunk.
Since a level-$(n+1)$ chunk is a union of two level-$n$ chunks, $u$ is contained in a level-$n$ chunk. Hence, $u \in E_n$, and the sequence $E_n$ form a descending chain
\[
 E = E_0 \supseteq E_1 \supseteq \cdots \supseteq E_m ,
\]
where $m$ is the smallest integer such that $E_{m+1}=\emptyset$.
Let $F_i = E_i \setminus E_{i+1}$, so $E = F_0 \cup F_1 \cup \cdots \cup F_m$ is expressed as a disjoint union, which we call the \emph{chunk decomposition} of $E$.
\begin{prop}
Let $Q \ge 6$ and $M$ be any $Q^n$-connected component of $F_n$. Then
$M$ has diameter $\le Q^n$ and is separated from $E_n \setminus M$ by distance $> \frac{1}{3}Q^{n+1}$.
\label{prop:structure-Fn}
\end{prop}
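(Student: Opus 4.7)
The proof rests on a single merging observation: if $K, K'$ are disjoint level-$n$ chunks, then
\[
d(K \cup K') \le d(K) + d(K, K') + d(K') \le Q^n + d(K, K'),
\]
so whenever $d(K, K') \le (Q/2 - 1) Q^n$, the union $K \cup K'$ has diameter at most $Q^{n+1}/2$ and is by definition a level-$(n+1)$ chunk. For $Q \ge 6$ this threshold is at least $Q^{n+1}/3$. Contrapositively, any site $u \in F_n$ lying in a level-$n$ chunk $K_u$ forbids every disjoint level-$n$ chunk from coming within distance $Q^{n+1}/3$ of $K_u$, since otherwise the merge would place $u$ in a level-$(n+1)$ chunk and contradict $u \notin E_{n+1}$.

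For the diameter bound, assume for contradiction $d(M) > Q^n$. Using $Q^n$-connectivity of $M$, walk a chain in $M$ with hops of length at most $Q^n$ and stop at the first site beyond distance $Q^n$ from the start; this produces $a, b \in M$ with $Q^n < d(a, b) \le 2 Q^n$. Any two overlapping level-$n$ chunks have union of diameter at most $d(K_a) + d(K_b) \le Q^n$, so the chunks $K_a \ni a$ and $K_b \ni b$ must be disjoint. But then $d(K_a, K_b) \le d(a, b) \le 2 Q^n \le (Q/2 - 1) Q^n$, so the merging observation forces $K_a \cup K_b$ to be a level-$(n+1)$ chunk and places $a \in E_{n+1}$, contradicting $a \in F_n$.

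For the separation bound, take $u \in M$ and $y \in E_n \setminus M$, with level-$n$ chunks $K_u \ni u$, $K_y \ni y$. If $y \in F_n \setminus M$, any overlap $K_u \cap K_y \ne \emptyset$ would yield $d(u, y) \le Q^n$ and merge the $Q^n$-connected components of $u$ and $y$, forcing $y \in M$; so $K_u, K_y$ are disjoint and the forbidden-merge property gives $d(u, y) \ge d(K_u, K_y) > Q^{n+1}/3$. If $y \in E_{n+1}$, fix a level-$(n+1)$ chunk $L = K_1 \cup K_2$ with $y \in K_1$. In the sub-case $K_u \cap K_1 = \emptyset$, a failure of the bound would give $d(K_u, K_1) \le d(u, y) \le Q^{n+1}/3$, making $K_u \cup K_1$ a level-$(n+1)$ chunk containing $u$ — contradiction.

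The main obstacle is the remaining sub-case $K_u \cap K_1 \ne \emptyset$: pairing $K_u$ with $K_1$ is not allowed (not disjoint), while pairing $K_u$ with $K_2$ only yields $d(K_u \cup K_2) \le Q^n + d(L) \le 4 Q^n$, which falls short of the required $Q^{n+1}/2$ by the naive estimate. My plan here is to use the shared site $w \in K_u \cap K_1$ as a bridge and exploit the specific structure of $L$: the constraint that $w$ and $y$ both live in the single level-$n$ chunk $K_1$ (so $d(w, y) \le Q^n/2$) combined with $d(L) \le Q^{n+1}/2$ and the assumed proximity $d(u, y) \le Q^{n+1}/3$ should refine the bound on $d(K_u, K_2)$ to a level-$(n+1)$-admissible value, invoking the merging observation one final time. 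Carefully juggling these three distance bounds between overlapping level-$n$ chunks glued through a shared site, with the hypothesis $Q \ge 6$ entering precisely to close the inequalities, is where I expect the bulk of the technical work to lie.
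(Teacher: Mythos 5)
Your merging observation is exactly the paper's ``dichotomy'' claim: for $u\in F_n$ and $v\in E_n$, either $d(u,v)\le Q^n$ or $d(u,v)>\frac{1}{3}Q^{n+1}$. Your diameter bound and your separation of $M$ from $F_n\setminus M$ follow cleanly from it, and in both places your argument matches the paper's intended route (which only proves the dichotomy and then stops). But the sub-case you flag --- $v\in E_{n+1}$ with $K_u\ni u$ overlapping $K_1\ni v$ --- is not a technical detail that a ``bridge through the shared site'' can close: the dichotomy genuinely allows $d(u,v)\le Q^n$ when $v\in E_{n+1}$, and in fact the separation from $E_n\setminus M$ as stated fails there. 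Embed a one-dimensional configuration in $\mathbb{Z}^3$, take $Q=6$ and $E=\{-3,0,3,15,17\}$. The level-$1$ chunks are $\{-3,0\}$, $\{0,3\}$, $\{15,17\}$, so $E_1=E$. The only disjoint pair of level-$1$ chunks whose union has diameter $\le Q^2/2=18$ is $\{0,3\}\cup\{15,17\}$ (diameter $17$); $\{-3,0\}\cup\{15,17\}$ has diameter $20$, and $\{-3,0\}$, $\{0,3\}$ are not disjoint. Hence $E_2=\{0,3,15,17\}$ and $F_1=\{-3\}$. Taking $M=\{-3\}$, a $Q$-connected component of $F_1$, one gets $d(M,E_1\setminus M)=d(-3,0)=3$, which is not greater than $Q^2/3=12$.

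So you have identified a genuine gap rather than merely an unfinished computation. The paper's proof establishes only the dichotomy and silently leaves the derivation of the proposition to the reader; that derivation works for the diameter bound and for separation from $F_n\setminus M$ (both of which you completed), but not for separation from $E_{n+1}$. Do not try harder to juggle the three distance bounds in your last paragraph --- no choice of constants will rescue it, because the conclusion is false. The correct move is to weaken the statement to separation from $F_n\setminus M$ and then revisit how the downstream lemma handles cross-level separations.
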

\begin{proof}
We claim that for any pair of sites $u \in F_n = E_n \setminus E_{n+1}$ and $v \in E_n$ we have $ d( u,v) \le Q^n$ or $d( u,v ) > \frac{1}{3} Q^{n+1}$.
Suppose on the contrary to the claim, that there is a pair $u \in F_n$ and $v \in E_n$ such that $ Q^n < d(u,v) \le Q^{n+1} /3$. Let $C_u \ni u$ and $C_v \ni v$ be level-$n$ chunks that contains $u$ and $v$, respectively. Since the diameters of $C_{u,v}$ are $\le Q^n /2$ and $d(u,v) > Q^n$, we deduce that $C_u$ and $C_v$ are disjoint. On the other hand,
\[
 d( C_u \cup C_v ) \le d( u,v ) + d(C_u) + d(C_v)  \le Q^{n+1}/2
\]
since $Q \ge 6$. Thus, $C_u \cup C_v$ is a level-$(n+1)$ chunk that contains $u$
which shows that $u \in E_{n+1}$. It contradicts to our assumption that $u\in F_n=E_n \setminus E_{n+1}$.
\end{proof}
Note that in the chunk decomposition a $Q^n$-connected component $P$ of $E_n$ may not be separated from the rest $E \setminus P$ by distance $> Q^n$.
\begin{lem}
Let $Q \ge 10$. If the length $m$ of the chunk decomposition of an error $E$ satisfies $Q^{m+1} < \ltqo$, then $E$ is corrected by the RG decoder.
\label{lem:correctability-criterion}
\end{lem}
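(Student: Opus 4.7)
The plan is to prove the lemma by induction on the chunk level $n = 0, 1, \ldots, m$, showing that as the RG decoder sweeps through successively larger length scales it annihilates each $Q^n$-connected component of $F_n$ before moving on to level $n+1$. For each $n$ I would choose an integer $p_n$ satisfying $Q^n + 2 \le 2^{p_n} \le Q^{n+1}/3 - 2$; such a power of two always exists because the hypothesis $Q \ge 10$ forces the multiplicative gap $Q/3$ between the endpoints to exceed $2$ even after absorbing the $O(1)$ additive constants arising from $1$-neighborhoods. The invariant I would carry is that after EC$(0), \ldots, $EC$(p_n)$ have been executed, the accumulated correction $P^{(n)}$ satisfies $E \cdot P^{(n)} = G_n \cdot R_n$ for some stabilizer $G_n$ and some Pauli operator $R_n$ whose support sits in an $O(2^{p_n})$-neighborhood of $E_{n+1}$, with each $Q^k$-connected component of $F_k$ (for $k > n$) contributing an isolated local piece that respects the separation guarantees of Proposition~\ref{prop:structure-Fn}.

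For the inductive step I would examine how EC$(p_{n+1})$ treats each $Q^{n+1}$-connected component $M$ of $F_{n+1}$. By Proposition~\ref{prop:structure-Fn}, $M$ has diameter at most $Q^{n+1}$ and is separated from $E_{n+1} \setminus M$ by distance greater than $Q^{n+2}/3$. Under the inductive invariant, the current defects localized around $M$ form a cluster of diameter at most $Q^{n+1} + O(1)$, which fits inside a single $2^{p_{n+1}}$-connected component, while the nearest foreign defects remain at distance greater than $Q^{n+2}/3 - O(1) > 2^{p_{n+1}}$, making the cluster a distinct component of the current syndrome. The cluster is neutral because it is produced by a local Pauli operator of diameter less than $\ltqo$, which is precisely where the hypothesis $Q^{m+1} < \ltqo$ is invoked. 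TestNeutral therefore finds an annihilating operator supported in the $1$-neighborhood of the minimal enclosing box, and applying it reduces the residual operator near $M$ to a stabilizer by the topological order condition, re-establishing the invariant at level $n+1$. When the induction terminates at $n = m$, the residual $R_m$ is supported near $E_{m+1} = \emptyset$ and must be trivial, so $E \cdot P_{ec} = G_m$ is a stabilizer, which is exactly what it means for the RG decoder to correct $E$.

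The step I expect to require the most care is handling the intermediate levels $p$ with $p_n < p < p_{n+1}$, which are run by the decoder even though they do not correspond to any particular chunk scale. At such a scale $2^p$ may split the defects associated to some $M \subseteq F_{n+1}$ into several $2^p$-connected sub-clusters, and EC$(p)$ will opportunistically annihilate those sub-clusters that happen to be neutral while ignoring the rest. I would argue that such partial corrections are harmless: every annihilating operator is confined to the $1$-neighborhood of a small enclosing box inside the current support attributed to $M$, and the absolute separation $> Q^{n+2}/3$ between $M$ and the rest of $E_{n+1}$ is a property of the original error sites and is preserved across rounds, so each intermediate EC round merely redistributes the residual operator near $M$ modulo stabilizers. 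The main technical obstacle is bookkeeping the additive $O(1)$ thickenings that accumulate across many intermediate levels and confirming that they remain compatible with the window $[Q^n+2, Q^{n+1}/3-2]$ chosen for $p_n$ in the tight case $Q = 10$.
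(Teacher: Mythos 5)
Your plan follows the same route as the paper: chunk decomposition, the separation estimates of Proposition~\ref{prop:structure-Fn}, neutrality guaranteed by $Q^{m+1} < \ltqo$, and the topological order condition to conclude that the residual after annihilation is a stabilizer. The paper organizes the induction differently, though, and that difference matters for the two technical points you yourself flag as delicate. Rather than selecting milestone levels $p_n$ and worrying separately about the intermediate levels $p_n < p < p_{n+1}$, the paper inducts uniformly over \emph{every} RG level $p$. For each $Q^j$-connected component $F_{j,\alpha}$ of $F_j$ it fixes once and for all the box $B_{j,\alpha}$ (the $1$-neighborhood of the minimal box enclosing the syndrome of $P$ restricted to $F_{j,\alpha}$), and carries the two-part invariant that (i) $P_{ec}^{(p)}$ is supported in $\bigcup_{j,\alpha} B_{j,\alpha}$, and (ii) $P_{ec}^{(p)}$ agrees with $P$ modulo stabilizers on $B_{j,\alpha}$ whenever $2^p \ge Q^j+2$. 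Because condition (ii) is quantified over all $j$ passing the current threshold, the step $p\to p+1$ is handled identically whether or not a new chunk scale is "crossed," which eliminates the intermediate-level bookkeeping entirely. This also corrects the shape of your invariant: the residual after level $p$ is supported in the union of the \emph{unfinished} boxes $B_{j,\alpha}$ (those with $2^p < Q^j+2$), whose diameters scale like $Q^j$, not in an $O(2^{p_n})$-neighborhood of $E_{n+1}$ — boxes at chunk level $j>n$ can be much larger than $2^{p_n}$.

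There is also a concrete arithmetic gap in the milestone choice: the window $[Q^n+2,\; Q^{n+1}/3-2]$ is empty at $n=0$ when $Q=10$, since one would need $3 \le 2^{p_0} \le 10/3 - 2 \approx 1.33$. Your justification that the multiplicative gap $Q/3>2$ survives the $O(1)$ additive corrections only takes hold once $Q^n$ dominates those constants, i.e., $n\ge 1$, so the claim "such a power of two always exists" is false as stated. The paper sidesteps this by not committing to a single $p_n$ per chunk level: at $p=0$ the base case only requires the separation $>Q/3-2>1$ to keep $1$-connected syndrome components inside single boxes, and condition (ii) is vacuous until $2^p\ge Q^0+2$. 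To make your version rigorous you should either adopt the paper's uniform-$p$ invariant or handle $n=0$ separately, and restate the residual-support claim in terms of the boxes $B_{j,\alpha}$ rather than a $2^{p_n}$-thickening of $E_{n+1}$.
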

\begin{proof}
Consider any fixed error $P$ supported on a set of sites $E$.
Let $E=F_0 \cup F_1 \cup \cdots \cup F_m$ be the chunk decomposition of $E$, and
let $F_{j,\alpha}$ be the $Q^j$-connected components of $F_j$.
Also, let $B_{j,\alpha}$ be the $1$-neighborhood of the smallest box enclosing the syndrome
created by the restriction of $P$ onto $F_{j,\alpha}$.
Proposition~\ref{prop:structure-Fn} implies that
\begin{equation}
\label{subchunks2}
d(B_{j,\alpha})\le Q^j+2 \quad \mbox{and} \quad d(B_{j,\alpha}, B_{k,\beta})>\frac13 Q^{1+\min{(j,k)}}-2.
\end{equation}
Let $P_{ec}^{(p)}$ be the accumulated correcting operator returned by the levels $0,\ldots,p$  of the RG decoder.
Let us use induction in $p$ to prove the following statement.
\begin{enumerate}
\item The operator $P_{ec}^{(p)}$ has support on the union of the boxes $B_{j,\alpha}$.
\item  The operators $P_{ec}^{(p)}$ and $P$ have the same restriction on $B_{j,\alpha}$ modulo stabilizers
for any $j$ such that $2^p\ge Q^j+2$.
\end{enumerate}
The base of induction is $p=0$.  Using Eq.~(\ref{subchunks2}) we conclude that any
$1$-connected component of the syndrome $S(P)$ is fully contained
inside some box $B_{j,\alpha}$. It proves that $P_{ec}^{(0)}$ has support on the union of the boxes $B_{j,\alpha}$.
The second statement is trivial for $p=0$.

Suppose we have proved the above statement for some $p$.
Then the operator $P\cdot P_{ec}^{(p)}$ has support only inside
boxes $B_{j,\alpha}$ such that $2^p<Q^j+1$ (modulo stabilizers). It follows that any
$2^{p+1}$-connected component of the syndrome caused by $P\cdot P_{ec}^{(p)}$ is contained in some box
$B_{j,\alpha}$ with $2^p<Q^j+1$. Note that the RG decoder never adds new defects;
we just need to check that $2^{p+1}$-connected components do not cross the boundaries
between the boxes $B_{j,\alpha}$ with $2^p<Q^j+1$. This follows from Eq.~(\ref{subchunks2}).
Hence $P_{ec}^{(p+1)}$
has support in the union of $B_{j,\alpha}$. Furthermore, if
$2^p<Q^j+1\le 2^{p+1}$, the cluster of defects created by $P\cdot P_{ec}^{(p)}$
inside $B_{j,\alpha}$ forms a single  $2^{p+1}$-connected component
of the syndrome examined by  EC$(p+1)$.
This cluster is neutral since we assumed $Q^{m+1}<L_{tqo}$.
Hence $P_{ec}^{(p+1)}$ will annihilate this cluster.
The annihilation operator is equivalent to the restriction of $P\cdot P_{ec}^{(p)}$
onto $B_{j,\alpha}$ modulo stabilizers, since the linear size
of $B_{j,\alpha}$ is smaller than $\ltqo$. It proves the induction hypothesis
for the level $p+1$. 
\end{proof}

The preceding lemma says that errors by which the RG decoder could be confused are those from very high level chunks. What is the probability of the occurrence of such a high level chunk if the error is random according to Eq.\eqref{eq:random-error-distribution}? Since our probability distribution of errors depend only on the number of sites in $E$, this question is completely percolation-theoretic.

Let us review some terminology from the percolation theory\cite{Grimmett1999Percolation}. An event is a collection of configurations. In our setting, a configuration is a subset of the lattice. Hence, we have a partial order in the configuration space by the set-theoretic inclusion. An event $\mathcal E$ is said to be \emph{increasing} if $E \in \mathcal{E}, E \subseteq E'$ implies $E' \in \mathcal{E}$. For example, the event defined by the criterion that there exists an error at $(0,0)$, is increasing.
The \emph{disjoint occurrence} $\mathcal A \circ \mathcal B$ of the events $\mathcal A$ and $\mathcal B$ is defined as the collection of configurations $E$ such that $E = E_a \cup E_b$ is a disjoint union of $E_a \in \mathcal A$ and $E_b \in \mathcal B$. To illustrate the distinction between $\mathcal A \circ \mathcal B$ and $\mathcal A \cap \mathcal B$, consider two events defined as $\mathcal A = $ ``there are errors at $(0,0)$ and at $(1,0)$'', and $\mathcal B = $ ``there are errors at $(0,0)$ and at $(0,1)$''. The intersection $\mathcal A \cap \mathcal B$ contains a configuration $\{ (0,0),(1,0),(0,1) \}$, but the disjoint occurrence $\mathcal A \circ \mathcal B$ does not. A useful inequality by van den Berg and Kesten (BK) reads \cite{BergKesten1985, Grimmett1999Percolation}
\begin{equation}
 \mathrm{Pr}[ \mathcal A \circ \mathcal B ] \le \mathrm{Pr}[ \mathcal A ] \cdot \mathrm{Pr}[ \mathcal B]
\label{eq:BK-inequality}
\end{equation}
provided the events $\mathcal A$ and $\mathcal B$ are increasing.

\begin{proof}(of Theorem~\ref{thm:universal-threshold})
Consider a $D$-dimensional lattice
and a random error $E$ defined by Eq.~(\ref{eq:random-error-distribution}).
Let $B_n$ be a fixed cubic box of linear size $Q^n$ and
$B_n^+$ be the  box of linear
size $3Q^n$ centered at $B_n$.
Define the following probabilities:
\begin{eqnarray}
p_n &=& \mathrm{Pr}\left[ \mbox{$B_n$ has a non-zero overlap with a level-$n$ chunk of $E$}\right] \nn \\
\tilde{p}_n &=& \mathrm{Pr}\left[\mbox{$B_n^+$ contains a level-$n$ chunk of $E$}\right] \nn \\
q_n &=& \mathrm{Pr}\left[\mbox{$B_n^+$ contains $2$ disjoint  level-$(n-1)$ chunks of $E$}\right] \nn \\
r_n &=& \mathrm{Pr}\left[\mbox{$B_n^+$  contains a level-$(n-1)$ chunk of $E$}\right] \nn
\end{eqnarray}
Note that all these probabilities do not depend on the choice of the box $B_n$
due to translation-invariance. Since a level-$0$ chunk is just a single site
of $E$, we have $p_0=\epsilon$.
We begin by noting that
\[
p_n\le \tilde{p}_n\le q_n.
\]
Here we used  the fact that
any level-$n$ chunk has diameter at most $Q^n/2$ and that any level-$n$ chunk
consists of a disjoint pair of level-$(n-1)$ chunks.
Let us fix the box $B_n^+$ and let ${\cal Q}_n$ be the event that
$B_n^+$ contains a disjoint pair of  level-$(n-1)$ chunks of $E$.
Let ${\cal R}_n$ be the event that $B_n^+$ contains a level-$(n-1)$ chunk of $E$.
Then ${\cal Q}_n={\cal R}_n\circ {\cal R}_n$.
It is clear that  ${\cal Q}_n$ and ${\cal R}_n$  are increasing events.
Applying the van den Berg and Kesten inequality we arrive at
\[
q_n\le r_n^2.
\]
Finally, since $B_n^+$ is a disjoint union of $(3Q)^D$ boxes of linear size $Q^{n-1}$,
the union bound yields
\[
r_n\le (3Q)^D p_{n-1}.
\]
Combining the above inequalities we get
$p_n \le (3Q)^{2D} p_{n-1}^2$, and hence
\[
p_n \le (3Q)^{-2D}((3Q)^{2D} \epsilon)^{2^n}.
\]
The probability $p_n$ is  doubly exponentially small in $n$
whenever $\epsilon < (3Q)^{-2D}$.
 If there exists at least one level-$n$ chunk, there is always a box of linear size $Q^n$ that overlaps with it. Hence, on the finite system of linear size $L$, the probability of the occurrence of a level-$m$ chunk is bounded above by $L^D p_m$.
 Employing
 Lemma~\ref{lem:correctability-criterion}, we conclude that the RG decoder fails with probability at most $p_{fail}=L^D p_m$
  for any $m$ such that $Q^{m+1} < \ltqo$. Since we assumed that $\ltqo\ge L^\gamma$,
 one can choose $m\approx \gamma \log{L}/\log{Q}$.
 In this case $p_{fail}=\exp{(-\Omega(L^\eta))}$ for $\eta \approx \gamma/\log{Q}$.
 We have proved our theorem with $\epsilon_0 = (3Q)^{-2D}$ where $Q = 10$.
\end{proof}


\begin{thebibliography}{10}

\bibitem{AliceaOregRefaelEtAl2011wire}
J.~Alicea, Y.~Oreg, G.~Refael, F.~von Oppen, and M.~P.~A. Fisher.
\newblock Non-abelian statistics and topological quantum information processing
  in 1{D} wire networks.
\newblock {\em Nature Physics}, 7:412--417, 2011.

\bibitem{AlickiFannesHorodecki2008thermalization}
R.~Alicki, M.~Fannes, and M.~Horodecki.
\newblock On thermalization in {K}itaev's 2{D} model.
\newblock {\em J.~Phys.~A: Math.~Theor.}, 42:065303, 2009.

\bibitem{AlickiHorodeckiHorodeckiEtAl2008thermal}
R.~Alicki, M.~Horodecki, P.~Horodecki, and R.~Horodecki.
\newblock On thermal stability of topological qubit in {K}itaev's {4}{D} model.
\newblock {\em Open Syst. Inf. Dyn.}, 17:1, 2010.

\bibitem{AsanoTanaka2010}
T.~Asano and H.~Tanaka.
\newblock In-place algorithm for connected components labeling.
\newblock {\em JPRR}, 1:10--22, 2010.

\bibitem{Bacon2006Operator}
D.~Bacon.
\newblock Operator quantum error correcting subsystems for self-correcting
  quantum memories.
\newblock {\em Phys. Rev. A}, 73:012340, 2006.

\bibitem{BortzKalosLebowitz1975}
A.~B. Bortz, M.~H. Kalos, and J.~L. Lebowitz.
\newblock A new algorithm for {M}onte {C}arlo simulation of {I}sing spin
  systems.
\newblock {\em J. Comp. Phys.}, 17(1):10 -- 18, 1975.

\bibitem{BravyiHaah2011Energy}
S.~Bravyi and J.~Haah.
\newblock On the energy landscape of 3{D} spin hamiltonians with topological
  order.
\newblock {\em Phys. Rev. Lett.}, 107:150504, 2011.

\bibitem{BravyiHastingsMichalakis2010stability}
S.~Bravyi, M.~Hastings, and S.~Michalakis.
\newblock Topological quantum order: stability under local perturbations.
\newblock {\em J. Math. Phys.}, 51:093512, 2010.

\bibitem{BravyiHastings2011short}
S.~Bravyi and M.~B. Hastings.
\newblock A short proof of stability of topological order under local
  perturbations.
\newblock {\em Comm. Math. Phys.}, 307:609--627, 2011.

\bibitem{BravyiKoenig2011}
S.~Bravyi and R.~K\"oenig.
\newblock Disorder-assisted error correction in {M}ajorana chains.
\newblock 2011, arXiv:1108.3845.

\bibitem{BravyiLeemhuisTerhal2010Topological}
S.~Bravyi, B.~Leemhuis, and B.~M. Terhal.
\newblock Topological order in an exactly solvable {3D} spin model.
\newblock {\em Ann. Phys.}, 326(4):839 -- 866, 2011.

\bibitem{BravyiTerhal2008no-go}
S.~Bravyi and B.~M. Terhal.
\newblock A no-go theorem for a two-dimensional self-correcting quantum memory
  based on stabilizer codes.
\newblock {\em New J. Phys.}, 11:043029, 2009.

\bibitem{CastelnovoChamon2007Entanglement}
C.~Castelnovo and C.~Chamon.
\newblock Entanglement and topological entropy of the toric code at finite
  temperature.
\newblock {\em Phys. Rev. B}, 76(18):184442, 2007.

\bibitem{Chamon2005Quantum}
C.~Chamon.
\newblock Quantum glassiness.
\newblock {\em Phys. Rev. Lett.}, 94:040402, 2005.

\bibitem{ChesiLossBravyiEtAl2009Thermodynamic}
S.~Chesi, D.~Loss, S.~Bravyi, and B.~M. Terhal.
\newblock Thermodynamic stability criteria for a quantum memory based on
  stabilizer and subsystem codes.
\newblock {\em New J. Phys.}, 12:025013, 2009.

\bibitem{ChesiRoethlisbergerLoss2009Self-Correcting}
S.~Chesi, B.~R\"othlisberger, and D.~Loss.
\newblock Self-correcting quantum memory in a thermal environment.
\newblock {\em Phys. Rev. A}, 82(2):022305, 2010.

\bibitem{Davies1974}
E.~B. Davies.
\newblock Markovian master equations.
\newblock {\em Comm. Math. Phys.}, 39:91--110, 1974.

\bibitem{DennisKitaevLandahlEtAl2002Topological}
E.~Dennis, A.~Kitaev, A.~Landahl, and J.~Preskill.
\newblock Topological quantum memory.
\newblock {\em J. Math. Phys.}, 43:4452--4505., 2002.

\bibitem{Duclos-CianciPoulin2009Fast}
G.~Duclos-Cianci and D.~Poulin.
\newblock Fast decoders for topological quantum codes.
\newblock {\em Phys. Rev. Lett.}, 104:050504, 2009.

\bibitem{FuKane2008proximity}
L.~Fu and C.~L. Kane.
\newblock Superconducting proximity effect and {M}ajorana fermions at the
  surface of a topological insulator.
\newblock {\em Phys. Rev. Lett.}, 100:096407, 2008.

\bibitem{Gacs1986Reliable}
P.~G\'acs.
\newblock Reliable computation with cellular automata.
\newblock {\em J. Comp. and System Sciences}, 32(1):15 -- 78, 1986.

\bibitem{Gottesman1998Theory}
D.~Gottesman.
\newblock A theory of fault-tolerant quantum computation.
\newblock {\em Phys.Rev. A}, 57:127, 1998.

\bibitem{Gray2001Guide}
L.~Gray.
\newblock A reader's guide to {G}\'acs's ``positive rates'' paper.
\newblock {\em J. Stat. Phys.}, 103:1--44, 2001.

\bibitem{Grimmett1999Percolation}
G.~Grimmett.
\newblock {\em Percolation, 2ed}.
\newblock Springer, 1999.

\bibitem{Haah11+}
J.~Haah.
\newblock in preparation.

\bibitem{Haah2011Local}
J.~Haah.
\newblock Local stabilizer codes in three dimensions without string logical
  operators.
\newblock {\em Phys. Rev. A}, 83(4):042330, 2011.

\bibitem{HammaCastelnovoChamon2009ToricBoson}
A.~Hamma, C.~Castelnovo, and C.~Chamon.
\newblock Toric-boson model: Toward a topological quantum memory at finite
  temperature.
\newblock {\em Phys. Rev. B}, 79(24):245122, 2009.

\bibitem{Harrington2004thesis}
J.~Harrington.
\newblock {\em Analysis of quantum error-correcting codes: symplectic lattice
  codes and toric codes}.
\newblock PhD thesis, California Institute of Technology, 2004.

\bibitem{Hastings2011warmTQO}
M.~B. Hastings.
\newblock Topological order at non-zero temperature.
\newblock {\em Phys. Rev. Lett}, 107:210501, 2011.

\bibitem{HormoziBenesteelSimon2009tqc}
L.~Hormozi, N.~E. Bonesteel, and S.~H. Simon.
\newblock Topological quantum computing with {Read-Rezayi} states.
\newblock {\em Phys. Rev. Lett.}, 103:160501, 2009.

\bibitem{IblisdirEtAl2010thermal}
S.~Iblisdir, D.~P\'erez-Garc\'ia, M.~Aguado, and J.~Pachos.
\newblock Thermal states of anyonic systems.
\newblock {\em Nuclear Physics B}, 829(3):401 -- 424, 2010, 0812.4975.

\bibitem{KayColbeck2008Quantum}
A.~Kay and R.~Colbeck.
\newblock Quantum self-correcting stabilizer codes.
\newblock 2008, arXiv:0810.3557.

\bibitem{KiranEtAl2011Labeling}
B.R. Kiran, K.R. Ramakrishnan, Y.S. Kumar, and K.P. Anoop.
\newblock An improved connected component labeling by recursive label
  propagation.
\newblock In {\em Communications (NCC), 2011 National Conference on}, pages 1
  --5, 2011.

\bibitem{Kitaev2003Fault-tolerant}
A.~Yu. Kitaev.
\newblock Fault-tolerant quantum computation by anyons.
\newblock {\em Ann. Phys.}, 303:2--30, 2003.

\bibitem{LEcuyerEtAl2002RngStream}
P.~L'Ecuyer, R.~Simard, E.~J. Chen, and W.~D. Kelton.
\newblock An object-oriented random-number package with many long streams and
  substreams.
\newblock {\em Operations Research}, 50(6):1073--1075, 2002.

\bibitem{MichalakisPytel2011stability}
S.~Michalakis and J.~Pytel.
\newblock Stability of frustration-free hamiltonians.
\newblock 2011, arXiv:1109.1588.

\bibitem{NayakEtAl2008tqc}
C.~Nayak, S.~H. Simon, A.~Stern, M.~Freedman, and S.~Das Sarma.
\newblock Non-abelian anyons and topological quantum computation.
\newblock {\em Rev. Mod. Phys.}, 80:1083, 2008.

\bibitem{NussinovOrtiz2008Autocorrelations}
Z.~Nussinov and G.~Ortiz.
\newblock Autocorrelations and thermal fragility of anyonic loops in
  topologicallyquantum ordered systems.
\newblock {\em Phys. Rev. B}, 77(6):064302, 2008.

\bibitem{OregRefaelOppen2010helical}
Y.~Oreg, G.~Refael, and F.~von Oppen.
\newblock Helical liquids and {Majorana} bound states in quantum wires.
\newblock {\em Phys. Rev. Lett.}, 105:177002, 2010.

\bibitem{PastawskiClementeCirac2011dissipation}
F.~Pastawski, L.~Clemente, and J.~I. Cirac.
\newblock Quantum memories based on engineered dissipation.
\newblock {\em Phys. Rev. A}, 83:012304, 2011.

\bibitem{SauLutchynEtAl2010semiconductor}
J.~D. Sau, R.~M. Lutchyn, S.~Tewari, and S.~Das~Sarma.
\newblock Generic new platform for topological quantum computation using
  semiconductor heterostructures.
\newblock {\em Phys. Rev. Lett.}, 104:040502, 2010.

\bibitem{Spohn1977}
H.~Spohn.
\newblock An algebraic condition for the approach to equilibrium of an open
  $n$-level system.
\newblock {\em Lett. Math. Phys.}, 2:33--38, 1977.

\bibitem{StanescuLutchynSarma2011semiconductor}
T.~D. Stanescu, R.~M. Lutchyn, and S.~Das~Sarma.
\newblock Majorana fermions in semiconductor nanowires.
\newblock {\em Phys. Rev. B}, 84:144522, 2011.

\bibitem{StarkImamogluRenner2011Localization}
C.~Stark, A.~Imamoglu, and R.~Renner.
\newblock Localization of toric code defects.
\newblock {\em Phys. Rev. Lett.}, 107:030504, 2011.

\bibitem{BergKesten1985}
J.~van~den Berg and H.~Kesten.
\newblock Inequalities with applications to percolation and reliability.
\newblock {\em J. Applied Probability}, 22:556--569, 1985.

\bibitem{WoottonPachos2011disorder}
J.~R. Wootton and J.~K. Pachos.
\newblock Bringing order through disorder: Localisation of errors in
  topological quantum memories.
\newblock {\em Phys. Rev. Lett}, 107:030503, 2011.

\bibitem{Yoshida:2011}
B.~Yoshida.
\newblock Feasibility of self-correcting quantum memory and thermal stability
  of topological order.
\newblock {\em Ann. Phys.}, 326:2566, 2011.

\end{thebibliography}
\end{document}